\theoremstyle{thmstyleone}%
\theoremstyle{thmstyletwo}%
\theoremstyle{thmstylethree}%
\newcommand{\vc}{\vcentcolon =}
\newcommand{\bone}{\mathbbm{1}}	
\newcommand{\sC}{\mathbb{C}}
\newcommand{\us}{\underline{\sigma}}
\newcommand{\ur}{\underline{\rho}}
\newcommand{\uF}{\underline{F}}
\DeclareMathOperator{\diag}{diag}
\def\fC{\mathbb{C}}
\def\fR{\mathbb{R}}
\def\ra{\rangle}
\def\la{\langle}
\newtheorem{thm}{Theorem}[section]
\newtheorem{prop}[thm]{Proposition}
\newtheorem{lem}[thm]{Lemma}
\theoremstyle{definition}
\numberwithin{equation}{section}
\newcounter{mnotecount}[section]
\begin{document}

\title[Monotonicity of the quantum 2-Wasserstein distance]{\center{Monotonicity of the quantum} 
\\  \center{2-Wasserstein distance}}


\author[1]{Rafa\l\ Bistro\'n}\email{rafal.bistron@student.uj.edu.pl}

\author[1]{Micha\l\ Eckstein}\email{michal.eckstein@uj.edu.pl}

\author[1,2]{Karol {\.Z}yczkowski}\email{karol.zyczkowski@uj.edu.pl}

\affil[1]{Institute of Theoretical Physics, 
Jagiellonian University, ul. {\L}ojasiewicza 11, 30--348 Krak\'ow, Poland}

\affil[2]{Center for Theoretical Physics, Polish Academy of Sciences, Al. Lotnik\'{o}w 32/46, 02-668 Warszawa, Poland}


\abstract{
We study a quantum analogue of the 2-Wasserstein distance as a measure of proximity on the set $\Omega_N$ of density matrices of dimension $N$. We show that such (semi-)distances do not induce Riemannian metrics on the tangent bundle of $\Omega_N$ and are typically not unitary invariant. Nevertheless, we prove that for $N=2$ dimensional Hilbert space the quantum 2-Wasserstein distance (unique up to rescaling) is monotonous with respect to any single-qubit quantum operation and the solution of the quantum transport problem is essentially unique. Furthermore, for any $N \geq 3$ and the quantum cost matrix proportional to a projector we demonstrate the monotonicity under arbitrary mixed unitary channels.
Finally, we provide numerical evidence which allows us to conjecture that the unitary invariant quantum 2-Wasserstein semi-distance is monotonous with respect to all CPTP maps in any dimension $N$.
}

\keywords{Quantum transport problem, Monge--Kantorovich distance, monotonicity with respect to quantum channels.
\ \ 
{\sl August 1, 2022}
}



\maketitle

\section{Introduction}
\label{sec:intro}

The distances on the space of probability vectors are a primary tool to compare and analyse various statistical distributions. On the mathematical side, they lie at the heart of information geometry \cite{Amari1,Amari2}, which unveils a powerful interplay between Riemannian geometry and statistics. In particular, with some distances on the space of probability vectors one can associate a Riemannian metric on the suitable tangent bundle. In the set of all such distances the one generated by the Fischer--Rao metric is distinguished as the unique continuous distance monotone under classical stochastic maps (Cencov theorem \cite{cencov}).

In the modern field of quantum information \cite{NielsenChuang,BZ17} classical probability vectors are replaced by density matrices. The quantum analogue of a stochastic map is a completely positive trace preserving (CPTP) linear map, $\Phi: \Omega_N \to \Omega_M$, where $\Omega_N \vc \{\rho \in \sC^{N \times N} \: \vert \: \rho=\rho^*, \ \rho\ge 0, \ \Tr \rho=1\}$ is the set of density matrices of order $N$. A semi-distance $D_{\text{mon}}$ on the space of quantum states is called \emph{monotone} if the inequality
	\begin{align}\label{mono}
	 D_{\text{mon}} ( \Phi(\rho^A), \Phi(\rho^B) ) \leq  D_{\text{mon}} ( \rho^A, \rho^B )
	\end{align}
holds for all states $\rho^A, \rho^B \in \Omega_N$, for any dimension $N$, and all  CPTP maps $\Phi$. 
	
The set of all monotone Riemannian metrics on the tangent bundle of $\Omega_N$, can be characterised, via the Morozova--Cencov--Petz theorem \cite{Morozova,Petz1}, through a single-parameter operator-monotone function. 
On the other hand, there exist important monotone distances, which do not induce a Riemannian metric --- most notably, the trace distance $D_{\text{Tr}}(\rho^A,\rho^B) = \Tr \vert \rho^A - \rho^B \vert$, with $\vert X \vert = \sqrt{XX^\dagger}$.

The property of monotonicity is crucial for applications in quantum information processing. It implies that the distinguishability, quantified by a given distance function, cannot increase under an application of any quantum map.

\medskip


First attempts to generalise the classical Monge transport distance to the quantum setting with the use of Husimi function were done in \cite{ZS98,ZS01}. More recently, a more general approach of Kantorovich \cite{Kan48} and Wasserstein \cite{Vil09} was extended to the quantum case an analysed from the mathematical \cite{AF17,CGGT17,BGJ19,Fri20,Duv20}, physical \cite{GMT16,GP18,CGP20,CM20,Ikeda20} and information-theoretical  \cite{CGNT17,PMTL20,PT21,KdPMLL21} perspective.  
Among these, a recent proposal \cite{transport1,transport2} is based on the optimization over the set of bipartite quantum states with fixed marginals of the expectation value of a certain `cost observable' (see also \cite{qWGAN,YZYY18,Reira18}). Concretely, for any classical distance $d$ on the set of $N$ points one can define \cite{transport2} an associated \emph{quantum cost operator}
\begin{align}\label{costE}
  C^Q_E = \sum_{j>i=1}^{N} E_{ij}  \vert \psi^{-}_{ij} \rangle \langle \psi^{-}_{ij} \vert,
\end{align} 
where $E_{ij} = d(x_i,x_j)$ is the distance between points $x_i, x_j$, while $\vert \psi^{-}_{ij} \rangle \langle \psi^{-}_{ij} \vert$ is the projector on the antisymmetric subspace spanned by two base vectors $\vert i \rangle$, $\vert j \rangle$, i.e. $\vert \psi^{-}_{ij} \rangle = \tfrac{1}{\sqrt{2}} (\vert i \rangle \vert j \rangle - \vert j \rangle \vert i \rangle)$. For any two density matrices $\rho^A, \rho^B \in \Omega_N$ one introduces the set $\Gamma^Q(\rho^A, \rho^B)$ of bi-partite coupling matrices $\rho^{AB}$, such that both partial traces are fixed, $\Tr_A \rho^{AB} = \rho^B$ and $\Tr_B \rho^{AB} = \rho^A$. Minimising the Hilbert--Schmidt scalar product of the cost matrix and $\rho^{AB}$ over all possible coupling matrices we arrive at the \emph{optimal quantum transport cost}, 
\begin{align}\label{TQ}
T^{Q}_E(\rho^A, \rho^B) \vc
 \min_{\rho^{AB}\in \Gamma^Q(\rho^A,\rho^B)}  \Tr\ \big( C^Q_E \rho^{AB} \big).
\end{align}
It was shown in \cite{transport2}  that $T^{Q}_E$ is a weak metric (i.e. a semi-metric bounded from below by a genuine metric) on $\Omega_N$ for any $N$ and any quantum cost matrix $C^Q_E$. Following the classical analogy, one can thus define \cite{transport1} the corresponding \emph{quantum 2-Wasserstein semi-distance}
\begin{align}\label{WE}
W_E(\rho^A, \rho^B) \vc
 \sqrt{ \min_{\rho^{AB}\in \Gamma^Q(\rho^A,\rho^B)}  \Tr\ \Big( (C^Q_E)^2 \rho^{AB} \Big)}.
\end{align}

 On the 2-point set all classical distance matrices $E$ are equivalent up to rescaling. In higher dimensions a distinguished role is played by the simplex geometry, for which $E_{ij} = 1- \delta_{ij}$. In this case, the corresponding quantum cost matrix \eqref{costE} forms a projector onto the antisymmetric subspace of the Hilbert space $\sC^{N \times N}$. For sake of brevity we shall denote such a quantum cost matrix by
\begin{align}\label{CQ}
C^Q \vc \tfrac{1}{2} (\bone_N - S), \quad \text{ where } \quad S (\vert x \rangle \vert y \rangle) = \vert y \rangle \vert x \rangle
\end{align}
and, since $(C^Q)^2 = C^Q$, write
\begin{align}\label{W}
W(\rho^A, \rho^B) \vc \sqrt{T^{Q}(\rho^A, \rho^B)}, \; \text{ with } \; T^{Q}(\rho^A, \rho^B) \vc \min_{\rho^{AB}}  \ \Tr \big( C^Q \rho^{AB} \big).
\end{align}

For single-qubit states, $N=2$, it was shown \cite{transport1,transport2} that $W$ enjoys the triangle inequality and hence is a genuine distance on the Bloch ball $\Omega_2$. Furthermore, the numerical simulations strongly suggest that the triangle inequality actually holds for the quantum 2-Wasserstein semi-distance \eqref{W} with the cost matrix \eqref{CQ} in any dimension $N$. However, in the problem of monotonicity the triangle inequality does not play any role, hence we shall --- for simplicity --- focus on the general semi-distances $T^Q_E$ and the specific case of $T^Q$ determined by the projection cost matrix \eqref{CQ}.

The quantum optimal transport problem \eqref{TQ} admits a dual formulation \cite{qWGAN,transport2}. The optimization takes place over the set of pairs $\sigma^A$, $\sigma^B \in H_N$ of $N \times N$ Hermitian matrices, satisfying a certain algebraic constraint determined by the cost matrix $C_E$. Concretely, let 
\begin{align}\label{SN}
\Sigma_N \vc \{ \sigma^A, \sigma^B \in H_N \; \vert \; F \vc C_E - \sigma^A \otimes \bone_N - \bone_N \otimes \sigma^B \geq 0\},
\end{align}
then
\begin{align}\label{dual}
T^{Q}_E(\rho^A, \rho^B) = \sup_{(\sigma^A, \sigma^B) \in \Sigma_N} \; \Tr(\sigma^A\rho^A + \sigma^B\rho^B).
\end{align}
If neither of the states $\rho^A, \rho^B$ is pure, then the supremum is achieved \cite{transport2}.

The quantum optimal transport cost \eqref{TQ} can be efficiently computed using semidefinite programming \cite{transport2}. Moreover it seems to be a valuable tool for quantum
machine learning.
In particular, the quantum 2-Wasserstein semi-distance \eqref{W} has been shown to play a key role in the quantum Generative Adversarial Network scheme \cite{qWGAN}. From the viewpoint of applications in machine learning, the monotonicity of a distance is a highly desirable property, which ensures robustness against the noise of the learning algorithms.

\medskip

The primary purpose of this work is to study the monotonicity with respect to the quantum channels of the optimal quantum transport cost $T^Q_E$ defined above. We start in Section \ref{sec:unitary} by
considering some general properties of \eqref{TQ}
and show
that for $N\ge 3$ and
a general classical cost matrix $E$ 
the corresponding quantum
transport cost is not unitarily invariant.
Then,
we discuss a quantity associated with the projective transport distance --- the `SWAP-fidelity' --- and 
demonstrate,  in Section \ref{sec:Riemman}, 
that $T^Q_E$ does \emph{not} induce a Riemannian metric, hence the standard Morozova--Cencov--Petz theorem does not apply. Section \ref{sec:gen} includes a proof of monotonicity of $T^Q$ under general mixed unitary channels in any dimension $N$. Then, in Section \ref{sec:mono2} we focus on $N=2$ and provide a complete proof of monotonicity of $T^Q$ for arbitrary single-qubits channels. Finally, in Section \ref{sec:unique}, we show that for any two mixed non-isospectral qubits there exists a unique optimal coupling $\rho^{AB}$ yielding the minimum in \eqref{TQ}. Furthermore, we provide explicit formulae for the optimal coupling, and the optimal dual observables $\sigma^A$, $\sigma^B$, for isospectral or commuting qubits.

The article is supplemented by two appendices. Appendix \ref{sec:proofs} contains the proofs of technical results, while in Appendix \ref{sec:num} we provide substantial numerical evidence for the monotonicity of $T^Q$ under all qutrit and ququart channels. Based on these observations, we are tempted to conjecture that the unitary invariant quantum 2-Wasserstein semi-distance $W$ is actually monotone for all CPTP maps.

\section{No unitary invariance for general quantum 2-Wasserstein semi-distances}
\label{sec:unitary}

In the single-qubit case, $N=2$, the only (up to a trivial multiplicative factor) quantum cost matrix \eqref{costE} is a projector in the antisymmetric subspace \eqref{CQ}. But already in the qutrit case, $N=3$, there exists different cost matrices $C^Q_E$, for instance the one induced by the classical Euclidean distance on the line, $E_{12} = E_{23} = 1$, $E_{13} = 2$ (see the Supplemental Material in \cite{transport1}).

For the projection matrix $C^Q$ the optimal quantum transport cost \eqref{TQ} is invariant under unitary channels:
\begin{align*}
T^Q(U \rho^A U^\dagger, U \rho^B U^\dagger) = T^Q(\rho^A, \rho^B), \quad \text{ for any } \quad U \in \mathrm{U}(N).
\end{align*}
This stems from the fact that $(U \otimes U) C^Q (U^\dagger \otimes U^\dagger) = C^Q$, for any unitary matrix $U \in \mathrm{U}(N)$.

The latter property, however, does not hold for a general quantum cost matrix \eqref{costE}. This implies that one cannot expect the monotonicity to hold in full generality for $T^Q_E$, even under unitary channels. Indeed, let us take two qutrit states
\begin{align*}
\rho^A = \tfrac{1}{2} \big( \vert 1 \rangle + \vert 2 \rangle \big) \big( \langle 1 \vert + \langle 2 \vert \big), \qquad \rho^B = \tfrac{1}{5} \big( \vert 1 \rangle + 2 \vert 2 \rangle \big) \big( \langle 1 \vert + 2 \langle 2 \vert \big)
\end{align*}
and the quantum cost matrix induced by the line geometry,
\begin{align*}
C^Q_E = \vert \psi^{-}_{12} \rangle \langle \psi^{-}_{12} \vert + \vert \psi^{-}_{23} \rangle \langle \psi^{-}_{23} \vert + 2 \vert \psi^{-}_{13} \rangle \langle \psi^{-}_{13} \vert.
\end{align*}
Consider now a unitary channel, which interchanges the states  $\vert 2 \ra$ and $\vert 3\ra$,
\begin{equation*}
    U = \begin{pmatrix}
    1 & 0 & 0 \\
    0 & 0 & -1 \\
    0 & 1 & 0 \\
    \end{pmatrix}
\end{equation*}
and yields
\begin{align*}
\eta^A & = U \rho^A U^\dagger = \tfrac{1}{2} \big( \vert 1 \rangle + \vert 3 \rangle \big) \big( \langle 1 \vert + \langle 3 \vert \big), \\
\eta^B & = U \rho^B U^\dagger = \tfrac{1}{5} \big( \vert 1 \rangle + 2 \vert 3 \rangle \big) \big( \langle 1 \vert + 2 \langle 3 \vert \big).
\end{align*}
Since all of the involved states are pure, there is only one coupling matrix for both pairs \cite[Lemma A.3]{transport2}, $\rho^{AB} = \rho^A \otimes \rho^B$ and $\eta^{AB} = \eta^A \otimes \eta^B$. We thus have
\begin{align*}
T^Q_E (U \rho^A U^\dagger,U \rho^B U^\dagger) = \Tr C^Q_E \eta^{AB} = \tfrac{1}{10} > 
T^Q_E (\rho^A, \rho^B) = \Tr C^Q_E \rho^{AB} = \tfrac{1}{20}.
\end{align*}

Let us note that the lack of unitary invariance of a distance on the space of quantum states can be a desirable property in certain applications. In particular, the quantum Wasserstein distance of order 1 proposed in \cite{PMTL20}, which is not unitarily invariant, offers improved efficiency in quantum learning algorithms \cite{KdPMLL21}.

\section{The lack of Riemannian structure}
\label{sec:Riemman}

Let us now focus on the case of the projective cost matrix $C^Q$. The corresponding transport cost \eqref{W} can be associated with \emph{SWAP-fidelity} \cite{transport1},
    \begin{equation}\label{fS}
        F_S(\rho^A, \rho^B) \vc \max_{\rho^{AB} \in \Gamma^Q(\rho^A,\rho^B)} \big( \Tr \, S \rho^{AB} \big) = 1 - 2 T^Q(\rho^A,\rho^B).
    \end{equation}
    
Clearly, the monotonicity of $T^Q$ under CPTP maps is equivalent to the reverse monotonicity of the SWAP-fidelity, $F_S(\Phi(\rho^A), \Phi(\rho^B)) \geq F_S(\rho^A, \rho^B)$. The SWAP-fidelity shares many properties with the standard Uhlmann--Jozsa quantum fidelity \cite{Uh76,Jo94} and equals to the latter if either of the states $\rho^A,\rho^B$ is pure \cite{transport1}.

The quantum fidelity $F$ is indeed reverse monotone under all CPTP maps \cite{Jo94}. Consequently, any distance on the space of quantum state determined by a strictly decreasing function $h: [0,1] \to \mathbb{R}^+$, $D_{h,F}(\rho^A, \rho^B) =h(F(\rho^A, \rho^B))$, will automatically be monotone. The prominent examples include the root infidelity, $I \vc \sqrt{1-F}$, the Bures distance, $B \vc \sqrt{2(1-\sqrt{F})}$ and the Bures angle, $A \vc \tfrac{2}{\pi} \arccos \sqrt{F}$. As the SWAP-fidelity \eqref{fS} shares many important properties with the standard Uhlmann--Jozsa fidelity, one might expect it to be reverse monotone as well.

All of the above-mentioned fidelity-based distances generate Riemannian metrics on the tangent bundle of $\Omega_N$, for any $N$. More precisely (see e.g. \cite{SIOR17} and \cite{Amari1}), a semi-distance $d: \Omega_N \times \Omega_N \to \mathbb{R}^+$ generates a Riemannian metric $g$ on $T_\rho\Omega_N$ if the expansion
\begin{align}\label{g}
d(\rho,\rho+t v)^2 = g_\rho(v,v) t^2 + o(t^2), \text{ as } t \to 0^+,
\end{align}
holds for any $\rho \in \Omega_N$ and any $v \in T_\rho\Omega_N = \{v \in H_N, \text{ with } \Tr v = 0\}$.

All monotone Riemannian metrics on the set of quantum states, are characterised via the Morozova--Cencov--Petz theorem  \cite{Morozova,Petz1}. The latter gives an explicit formula for the metric $g$ in terms of a single operator monotone function.
However, there are important distances, which do \emph{not} generate a Riemannian metric, but are nevertheless monotone. A classical example is the $l_1$-distance (the `taxicab distance'), which corresponds to the trace distance in the quantum case. It turns out that the quantum 2-Wasserstein distance \eqref{W} shares this feature with the trace distance. 

Let us focus on the case $N=2$ and show the failure of \eqref{g} for the quantum 2-Wasserstein distance \eqref{W}. Because of the unitary invariance of $W$ we can restrict ourselves to the states in the real slice of the Bloch ball, $\Omega_2^\mathbb{R} \subset \Omega_2$,
\begin{align}\label{rho}
\rho(r,\theta) = \tfrac{1}{2} \big[ \bone + (2r-1) ( \sigma_1 \sin \theta  +  \sigma_3  \cos \theta ) \big], \quad \text{ with } r \in [0,1], \theta \in [0,\pi),
\end{align} 
where $\sigma_i$ denote the Pauli matrices. Consequently, $T_{\rho(r,\theta)}\Omega_2^\mathbb{R}$ is a real vector space diffeomorphic to $\mathbb{R}^2$. Furthermore, by unitary invariance, it is sufficient to consider the tangent space at the point $\rho(r,0)$. Basing on the results of \cite{transport2} we can compute the square of the metric derivative (see e.g. \cite{AGS05} for a precise definition) of the quantum 2-Wasserstein distance $W$ on $\Omega_2$ in the direction of a vector $v \in T_{\rho(r,0)}\Omega_2^\mathbb{R}$.

\begin{prop}\label{prop:F}
For any $r \in (0,1)$ and any tangent vector $v = (v_1,v_2) \in T_{\rho(r,0)}\Omega_2^\mathbb{R}$ we have
\begin{align} 
G(r,v) & \vc \lim_{t \to 0^+} \frac{T^Q \big(\rho(r,0),\rho(r + t v_1, t v_2)\big)}{t^2} \notag \\
& = \max_{\phi \in [0,2\pi)}  \frac{\big( 2 v_1 \cos(\phi) - (2r - 1) v_{2}\sin(\phi)  \big)^2}{16 \big( 1 + (2r-1) \cos(\phi) \big)} \label{our_F} 
\end{align}

\end{prop}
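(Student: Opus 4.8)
The plan is to use the dual formulation \eqref{dual}, which is well suited here because the $\phi$ appearing in the target \eqref{our_F} should emerge as the optimization variable of a supremum, and because for $r\in(0,1)$ neither $\rho(r,0)$ nor the nearby state $\rho(r+tv_1,tv_2)$ is pure, so the supremum over $\Sigma_2$ is attained. Writing $\rho_0\vc\rho(r,0)=\diag(r,1-r)$ and expanding \eqref{rho}, the second state is $\rho_0+t\,v+O(t^2)$ with the tangent direction represented by the traceless Hermitian matrix $v=v_1\sigma_3+\tfrac12(2r-1)v_2\sigma_1$. For $N=2$ the cost \eqref{CQ} is the singlet projector $C^Q=\vert\psi^-\rangle\langle\psi^-\vert$, whose kernel is the three-dimensional symmetric subspace $\Sh=\mathrm{span}\{\vert00\rangle,\vert11\rangle,\vert\psi^+\rangle\}$, with $\vert\psi^+\rangle=\tfrac1{\sqrt2}(\vert01\rangle+\vert10\rangle)$. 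Since $T^Q(\rho_0,\rho_0)=0$, the whole statement is a second-order expansion of this semidefinite program around the coincidence point.

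First I would pin down the dual optimum at $t=0$. Restricting the feasibility matrix $F=C^Q-\sigma^A\otimes\bone-\bone\otimes\sigma^B$ of \eqref{SN} to $\Sh$ (where $C^Q$ vanishes) forces $(\sigma^A\otimes\bone+\bone\otimes\sigma^B)\vert_{\Sh}\le0$; combining this with the requirement that the value $\Tr((\sigma^A+\sigma^B)\rho_0)$ equal its maximum $0$ pins the coincidence optimizer to the face $\sigma^A\otimes\bone+\bone\otimes\sigma^B=0$. Because $v$ is traceless, the $t$-linear term of $\Tr(\sigma^A\rho_0+\sigma^B(\rho_0+tv))$ along this face vanishes, consistently with $T^Q=W^2=O(t^2)$ (recall $W$ is a genuine distance for $N=2$). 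The quadratic coefficient then comes from perturbing the dual variables off this face: I would place the off-diagonal ($\Sh\leftrightarrow\vert\psi^-\rangle$) entries of $F$ at order $t$ and the diagonal slacks on $\Sh$ at order $t^2$, and take the Schur complement of $F$ with respect to its order-one $\vert\psi^-\rangle\langle\psi^-\vert$ block. This turns $F\ge0$ into an effective positive-semidefiniteness condition on the $3\times3$ symmetric block and reduces \eqref{dual} to a finite-dimensional SDP whose optimal value is exactly $G(r,v)$.

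The final step is to solve this reduced SDP and cast it in the form \eqref{our_F}. I would encode the rank-one tight direction of the effective symmetric-block constraint by a real one-qubit state $\vert n_\phi\rangle=\cos\tfrac\phi2\,\vert0\rangle-\sin\tfrac\phi2\,\vert1\rangle$; along each such direction the SDP collapses to a scalar problem of the type $\max_\lambda(\lambda K_\phi-\lambda^2 Q_\phi)=K_\phi^2/(4Q_\phi)$, with $K_\phi=\langle n_\phi\vert v\vert n_\phi\rangle$ the benefit and $Q_\phi=2\langle n_\phi\vert\rho_0\vert n_\phi\rangle$ the quadratic cost of violating feasibility in that direction. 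Inserting $\langle n_\phi\vert v\vert n_\phi\rangle=v_1\cos\phi-\tfrac12(2r-1)v_2\sin\phi$ and $\langle n_\phi\vert\rho_0\vert n_\phi\rangle=\tfrac12(1+(2r-1)\cos\phi)$ turns $K_\phi^2/(4Q_\phi)$ into exactly the summand of \eqref{our_F}, and maximizing over $\phi$ gives the claim.

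The main obstacle is the rigorous control of this second-order expansion. I must justify that the dual optimizer really scales as posited (off-diagonal perturbations $O(t)$, diagonal slacks $O(t^2)$), that the Schur-complement reduction captures the entire $O(t^2)$ contribution with no cross terms dropped, and that $\lim_{t\to0^+}$ may be exchanged with the supremum so that the finite SDP computes the limit rather than merely bounding it below. Equivalently, the matching upper bound $G(r,v)\le\max_\phi(\cdots)$ amounts to showing that the reduced SDP attains its optimum on the rank-one family $\vert n_\phi\rangle\langle n_\phi\vert$; I would secure this either by a KKT/complementary-slackness analysis of the $3\times3$ SDP forcing the tight block to be rank one, or by exhibiting, for the optimal $\phi$, an explicit primal coupling of $\rho_0$ and $\rho_0+tv$ whose singlet weight $\langle\psi^-\vert\rho^{AB}\vert\psi^-\rangle$ matches $t^2G(r,v)$ to leading order. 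As an alternative to the whole scheme, if the explicit two-qubit evaluation of $T^Q$ from \cite{transport2} is taken as input, one may instead Taylor-expand that closed form directly, the $\max_\phi$ then being inherited from the optimization already present in it.
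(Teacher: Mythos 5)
Your endpoint algebra is correct --- the tangent direction is indeed represented by $v_1\sigma_3+\tfrac12(2r-1)v_2\sigma_1$, and $K_\phi^2/(4Q_\phi)$ with $K_\phi=\langle n_\phi\vert v\vert n_\phi\rangle$, $Q_\phi=2\langle n_\phi\vert\rho_0\vert n_\phi\rangle$ reproduces exactly the summand of \eqref{our_F} --- and the two-sided structure you envisage (feasible dual pairs for $\liminf_{t\to 0^+}T^Q/t^2\geq G$, explicit primal couplings for $\limsup\leq G$) is sound in outline. But as a proof there is a genuine gap: every load-bearing analytic step is announced rather than carried out, as you yourself concede in the last paragraph. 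Concretely: (i) the posited scaling of the dual optimizers (singlet--symmetric couplings of order $t$, diagonal slacks of order $t^2$) is an ansatz; at $t=0$ the optimizer set of \eqref{dual} is only a face (the gauge line $\sigma^A=a\bone$, $\sigma^B=-a\bone$, along which $\Sigma_2$ is unbounded), so without a stability/compactness argument for near-optimal dual sequences as $t\downarrow 0$ you cannot assert that the Schur-complement reduction captures the entire $O(t^2)$ value; (ii) the interchange of $\lim_{t\to 0^+}$ with the optimization is precisely where uniform control is required, and none is supplied; (iii) the upper bound, which you correctly identify as the crux (rank-one tightness of the reduced $3\times 3$ SDP, or an explicit coupling whose singlet weight matches $t^2G(r,v)$), is deferred to an unexecuted KKT analysis. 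Until (i)--(iii) are done, the argument establishes the formula only heuristically.

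For comparison, the paper's actual proof takes exactly your fallback route and spends its entire effort on the uniformity issue you flag. It quotes the closed-form maximum formula \eqref{tq_org} from \cite{transport2}, writes $W\big(\rho(r,0),\rho(r+v_1t,v_2t)\big)$ as a maximum over $\phi$ of a difference of square roots, and proves via sum-to-product identities an explicit remainder bound $R(\xi,v,\phi,t)\leq c_1 t^2+c_2 t^3$ that is \emph{uniform in} $\phi$ (the denominators being controlled by $1+\xi\cos\phi\geq 1-\vert\xi\vert>0$); this legitimizes pulling the limit through the maximum in one stroke and yields both bounds simultaneously, with the degenerate case $r=1/2$ handled separately through \eqref{Tcomm}. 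So the $\max_\phi$ in \eqref{our_F} is inherited directly from the semi-analytic formula, not rediscovered through dual perturbation theory. Your SDP-sensitivity route is genuinely different and could in principle be completed --- your verification that the coincidence optimizers are pinned to the gauge line is correct, since a PSD matrix with a vanishing diagonal entry has a vanishing row --- but completing it (dual stability, rank-one tightness, limit exchange) would require substantially more machinery than the paper's direct, uniformly controlled Taylor expansion of the known closed form.
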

\begin{proof}
The proof, which bases on a semi-analytical formula derived in \cite{transport2}, can be found in Appendix \ref{sec:proofs}. 
\end{proof}

If $W$ would generate a Riemannian metric $g$ on $T\Omega_2$, then formula \eqref{g} would imply that $g$ can be recovered from the equality
\begin{align}\label{gF}
G(r,v) = g_{\rho(r,0)}(v,v) = g_{11}(r) v_1^2 + 2 g_{12}(r) v_1 v_2 + g_{22}(r) v_2^2,
\end{align} 
which should be valid for all $r \in (0,1)$ and $v \in \mathbb{R}^2$. The functions $g_{11}$ and $g_{22}$ can be computed explicitly from the analytic formulae for the 2-Wasserstein distance between commuting and isospectral qubits, \eqref{Tcomm} and \eqref{Tiso}, respectively. On the other hand, one can convince oneself with the help of formula \eqref{our_F} that $g_{12}$ depends not only on $r$, but also on the tangent vector $v$ -- see Fig. \ref{fig:Riem}. Hence, formula \eqref{gF} fails and the square of the metric derivative \eqref{our_F} of the 2-Wasserstein distance does \emph{not} induce a Riemannian metric on $T\Omega_2$.

\begin{figure}[h]
    \centering
        \includegraphics[height=1.5in]{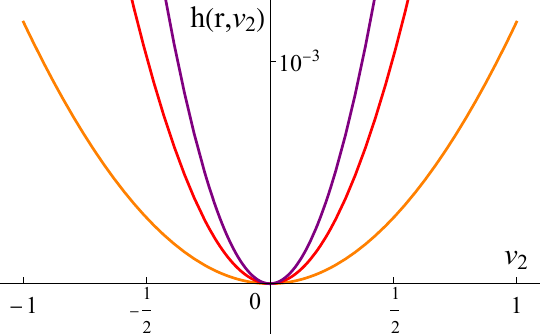}
     \centering
     \caption{\label{fig:Riem}The plot of $h(r,v_2) \vc G \big(r,(1, v_2) \big) - g_{11}(r) - g_{22}(r) v_2^2$ as a function of $v_2$ for $r = 1/3$, $r = 1/4$ and $r = 1/5$ --- colours orange, red and purple, respectively. If Formula \eqref{gF} would be true then, for any $r$, $h(r,v_2)$ would be a linear function of $v_2$, which clearly is not the case.}
\end{figure}

Finally, let us note that the failure of formula \eqref{gF} extends to the general case of $\Omega_N$ and any quantum cost matrix $C^Q_E$. This is because $\Omega_2$ is embedded isometrically in $\Omega_N$ for any $N \geq 3$ and the restriction of any $C^Q_E$ to a suitable four-dimensional subspace of $\mathbb{C}^{N \times N}$ is proportional to $C^Q$ --- see \cite[Prop. 2.4]{transport2}.

\section{Monotonicity under general mixed unitary channels}\label{sec:gen}

Recall that the CPTP maps can be conveniently characterised with the help of Kraus operators. For any channel $\Phi: \Omega_N \to \Omega_M$ there exists a set of $N \times M$ matrices $\{K_i\}_{i = 1}^R$, with $R \leq NM$, called Kraus operators, such that
\begin{equation}
\label{Krauss_def1}
    \sum_i K_i^{\dagger} K_i = \bone_N
\end{equation}
and for any $\rho \in \Omega_N$
\begin{equation}
\label{Krauss_def2}
    \Phi(\rho) = \sum_i K_i \rho K_i^{\dagger}.
\end{equation}
Conversely, quantum channels can be constructed through Kraus operators, since every set of $N \times M$ matrices satisfying \eqref{Krauss_def1} defines a CPTP map via \eqref{Krauss_def2}.

A particular class of quantum channels arises from the statistical mixtures of unitary maps. In such a case, we have $K_i = \sqrt{p_i} U_i$, where $U_i \in U(N)$ and $p_i \geq 0$, $\sum_i p_i = 1$. The general properties of the projective quantum transport cost imply the following result:
\begin{prop}\label{lem_comb1}
Let $\Psi$ be a mixed unitary channel on $\Omega_N$, then for any $\rho^A, \rho^B \in \Omega_N$ we have
\begin{align*}
 T^Q \big( \Psi(\rho^A),\Psi(\rho^B) \big) \leq T^Q(\rho^A, \rho^B).
\end{align*}
\end{prop}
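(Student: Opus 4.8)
The plan is to deduce the inequality from two ingredients: the \emph{joint convexity} of the optimal transport cost $T^Q$, and its unitary invariance, which was already established at the beginning of Section~\ref{sec:unitary} from the identity $(U\otimes U)C^Q(U^\dagger\otimes U^\dagger)=C^Q$. Since the unitary invariance is in hand, the real content is the joint convexity, and everything else is bookkeeping.

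First I would prove that $T^Q$ is jointly convex. The key observation is that the feasible set $\Gamma^Q$ interacts linearly with convex combinations of the marginals. Suppose $\sigma_i^{AB}\in\Gamma^Q(\sigma_i^A,\sigma_i^B)$ for a family of pairs $(\sigma_i^A,\sigma_i^B)$ and weights $p_i\geq 0$ with $\sum_i p_i=1$. Because the partial traces $\Tr_A$ and $\Tr_B$ are linear, the operator $\sum_i p_i\sigma_i^{AB}$ satisfies $\Tr_A\big(\sum_i p_i\sigma_i^{AB}\big)=\sum_i p_i\sigma_i^B$ and $\Tr_B\big(\sum_i p_i\sigma_i^{AB}\big)=\sum_i p_i\sigma_i^A$, so it is an admissible coupling of $\big(\sum_i p_i\sigma_i^A,\sum_i p_i\sigma_i^B\big)$. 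Choosing each $\sigma_i^{AB}$ to be optimal and using the linearity of $X\mapsto\Tr(C^Q X)$ yields
\begin{align*}
T^Q\Big(\sum_i p_i\sigma_i^A,\ \sum_i p_i\sigma_i^B\Big)\ \leq\ \Tr\Big(C^Q\sum_i p_i\sigma_i^{AB}\Big)\ =\ \sum_i p_i\,T^Q(\sigma_i^A,\sigma_i^B),
\end{align*}
which is precisely the desired joint convexity.

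I would then apply this to the mixed unitary channel $\Psi(\rho)=\sum_i p_i U_i\rho U_i^\dagger$. Crucially, $\Psi$ acts on both arguments with the \emph{same} ensemble $\{(p_i,U_i)\}$, so setting $\sigma_i^A=U_i\rho^A U_i^\dagger$ and $\sigma_i^B=U_i\rho^B U_i^\dagger$ gives $\Psi(\rho^A)=\sum_i p_i\sigma_i^A$ and $\Psi(\rho^B)=\sum_i p_i\sigma_i^B$ with a common weight vector. Joint convexity then delivers
\begin{align*}
T^Q\big(\Psi(\rho^A),\Psi(\rho^B)\big)\ \leq\ \sum_i p_i\,T^Q\big(U_i\rho^A U_i^\dagger,\ U_i\rho^B U_i^\dagger\big),
\end{align*}
and invoking the unitary invariance $T^Q(U\rho^A U^\dagger,U\rho^B U^\dagger)=T^Q(\rho^A,\rho^B)$ term by term, together with $\sum_i p_i=1$, collapses the right-hand side to $T^Q(\rho^A,\rho^B)$.

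I do not expect a genuine obstacle in this argument, since the hard geometric input — the $U\otimes U$ invariance of $C^Q$ — is already available. The single point requiring care is exactly the one highlighted above: a mixed unitary channel applies the identical ensemble to both states, so the two convex decompositions share the same weights and unitary invariance can be used index by index. Were the weights or the unitaries allowed to differ between $\rho^A$ and $\rho^B$, joint convexity alone would no longer close the estimate, which is precisely why the result is stated for mixed unitary channels rather than all CPTP maps.
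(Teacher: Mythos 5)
Your proof is correct and takes essentially the same route as the paper, which establishes the proposition exactly from the joint convexity of $T^Q$ together with its unitary invariance. The only difference is that you prove the joint convexity inline (mixing optimal couplings and using linearity of the partial traces and of $X\mapsto\Tr(C^Q X)$), whereas the paper simply cites it from Proposition 2.2 of \cite{transport2}; your inline argument is the standard one and is sound, including the key observation that a mixed unitary channel applies the same ensemble $\{(p_i,U_i)\}$ to both states.
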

\begin{proof}
This follows directly from the joint convexity of the transport cost \cite[Proposition 2.2]{transport2} and the unitary invariance of $T^Q$.
\end{proof}

The convexity of the optimal quantum transport cost $T^Q_E$ implies a more general result, which holds for any quantum cost matrix \eqref{costE}.

\begin{prop}\label{lem_comb2}
 Let $\Phi_1, \ldots, \Phi_R$ be quantum channels under which the optimal quantum transport cost is monotonous:
 \begin{equation*}
     T_E^Q \big(\Phi_i(\rho^A), \Phi_i(\rho^B) \big) \leq T_E^Q(\rho^A, \rho^B) \quad \text{ for all } \; \rho^A,\rho^B \in \Omega_N \text{ and } i \in \{1,\ldots,R\}.
 \end{equation*}
Then, $T^Q_E$ is monotonous under any convex combination of the channels $\Phi_1, \ldots, \Phi_R$,
 \begin{equation*}
     T_E^Q \big(\Psi(\rho^A), \Psi(\rho^B) \big) \leq T_E^Q(\rho^A, \rho^B) \quad \text{ for all } \; \Psi = \sum_i p_i \Phi_i, \text{ with } p_i\geq 0, \sum_i p_i = 1.
 \end{equation*}
\end{prop}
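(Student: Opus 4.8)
The plan is to combine two ingredients: the joint convexity of the optimal quantum transport cost $T_E^Q$ (Proposition 2.2 of \cite{transport2}) and the hypothesis that each channel $\Phi_i$ is individually monotone. Since a convex combination of density matrices is again a density matrix and each $\Phi_i$ is CPTP, every object appearing below is a legitimate state, so both tools apply directly.

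First I would expand the action of $\Psi$ by linearity. As $\Psi = \sum_i p_i \Phi_i$, for any state $\rho$ we have $\Psi(\rho) = \sum_i p_i \Phi_i(\rho)$, which is a convex combination of the states $\Phi_i(\rho)$. Hence
\[
T_E^Q \big( \Psi(\rho^A), \Psi(\rho^B) \big) = T_E^Q \Big( \sum_i p_i \Phi_i(\rho^A), \sum_i p_i \Phi_i(\rho^B) \Big).
\]

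Next I would apply joint convexity to the right-hand side, which yields
\[
T_E^Q \Big( \sum_i p_i \Phi_i(\rho^A), \sum_i p_i \Phi_i(\rho^B) \Big) \leq \sum_i p_i \, T_E^Q \big( \Phi_i(\rho^A), \Phi_i(\rho^B) \big).
\]
The individual monotonicity hypothesis then bounds each summand by $T_E^Q(\rho^A, \rho^B)$, and since $\sum_i p_i = 1$ the whole sum collapses to $T_E^Q(\rho^A, \rho^B)$, giving the claim.

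The argument is essentially a single chain of inequalities, so I do not expect any genuine obstacle. The only point requiring care is to invoke convexity in its \emph{joint} form --- convexity separately in each argument would not be enough --- and to observe that the convex weights $p_i$ are exactly the weights defining $\Psi$, which is precisely what lets the normalisation $\sum_i p_i = 1$ close the estimate.
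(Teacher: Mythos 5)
Your proof is correct and follows exactly the route the paper intends: the paper states this proposition without a displayed proof, noting only that it follows from the (joint) convexity of $T^Q_E$, which is precisely your chain of expanding $\Psi(\rho)=\sum_i p_i \Phi_i(\rho)$, applying joint convexity with the weights $p_i$, and bounding each term by the monotonicity hypothesis. Your remark that \emph{joint} convexity (Proposition 2.2 of \cite{transport2}), rather than convexity in each argument separately, is the needed ingredient is exactly the right point of care.
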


The last result implies that it is sufficient to study the monotonicity of the optimal quantum transport under the \emph{extremal channels}, i.e. the ones which cannot be decomposed as a convex combination of CPTP maps.

We can also invoke the dual formulation \eqref{dual} to derive the following useful result:
\begin{prop}\label{prop_phi}
Let $\Phi: \Omega_N \to \Omega_M$ be a CPTP map characterised by Kraus operators $\{K_i\}$ and denote by $\Phi^*: H_M \to H_N$ the map dual to $\Phi$, ie. $\Phi^*(\sigma) = \sum_i K_i^\dagger \sigma K_i$. If for any pair of matrices $\big(\sigma^A,\sigma^B\big) \in \Sigma_M$, with $\Sigma_M$ defined in \eqref{SN}, we have $\big(\Phi^*(\sigma^A),\Phi^*(\sigma^B)\big) \in \Sigma_N$, then $T^Q_E$ is monotone with respect to the map $\Phi$, $T^Q_E \big(\Phi(\rho^A),\Phi(\rho^B) \big) \leq T^Q_E(\rho^A, \rho^B)$.
\end{prop}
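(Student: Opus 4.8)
The plan is to carry the entire argument on the dual side, using the variational identity \eqref{dual}. The one ingredient needed beyond the hypothesis is the trace duality between a channel and its adjoint: for any $\sigma \in H_M$ and any $\rho \in \Omega_N$,
\begin{align*}
\Tr\big(\sigma\,\Phi(\rho)\big) = \Tr\Big(\sigma \sum_i K_i \rho K_i^\dagger\Big) = \sum_i \Tr\big(K_i^\dagger \sigma K_i\, \rho\big) = \Tr\big(\Phi^*(\sigma)\,\rho\big),
\end{align*}
which follows from \eqref{Krauss_def2} and cyclicity of the trace. Before invoking it I would record that $\Phi^*$ really does map $H_M$ into $H_N$, so that the hypothesis $\big(\Phi^*(\sigma^A),\Phi^*(\sigma^B)\big)\in\Sigma_N$ is even meaningful; this is immediate, since $\Phi^*(\sigma)^\dagger = \sum_i K_i^\dagger \sigma^\dagger K_i = \Phi^*(\sigma)$ whenever $\sigma$ is Hermitian.

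Next I would apply \eqref{dual} to the left-hand side in dimension $M$, the output dimension of $\Phi$, and then push the observables through the channel with the trace duality, rewriting the objective in terms of $\Phi^*$:
\begin{align*}
T^Q_E\big(\Phi(\rho^A),\Phi(\rho^B)\big) = \sup_{(\sigma^A,\sigma^B)\in\Sigma_M} \Tr\big(\sigma^A \Phi(\rho^A) + \sigma^B \Phi(\rho^B)\big) = \sup_{(\sigma^A,\sigma^B)\in\Sigma_M} \Tr\big(\Phi^*(\sigma^A)\rho^A + \Phi^*(\sigma^B)\rho^B\big).
\end{align*}
By hypothesis, as $(\sigma^A,\sigma^B)$ ranges over $\Sigma_M$ the image pair $\big(\Phi^*(\sigma^A),\Phi^*(\sigma^B)\big)$ stays inside $\Sigma_N$, so every feasible point of the $M$-dimensional dual problem is carried to a feasible point of the $N$-dimensional one with exactly the same objective value. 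A supremum can only grow when its feasible set is enlarged, whence
\begin{align*}
\sup_{(\sigma^A,\sigma^B)\in\Sigma_M} \Tr\big(\Phi^*(\sigma^A)\rho^A + \Phi^*(\sigma^B)\rho^B\big) \leq \sup_{(\tau^A,\tau^B)\in\Sigma_N} \Tr\big(\tau^A\rho^A + \tau^B\rho^B\big) = T^Q_E(\rho^A,\rho^B),
\end{align*}
which is precisely the claimed monotonicity.

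There is no real obstacle in this argument: its entire content is packaged into the hypothesis, which is tailored so that $\Phi^*$ maps the dual-feasible set $\Sigma_M$ into $\Sigma_N$. The only point deserving a word of care is that \eqref{dual} is an exact identity rather than a mere bound, so no appeal to attainment of the supremum is required and the proof goes through even when one of the states is pure. The genuinely hard work, which this proposition deliberately side-steps, is the verification of the hypothesis for concrete channels: there one must show that the positivity $C_E - \Phi^*(\sigma^A)\otimes\bone_N - \bone_N\otimes\Phi^*(\sigma^B) \geq 0$ is inherited from the corresponding positivity in dimension $M$, and it is this verification that carries the real geometric content behind the monotonicity results established elsewhere in the paper.
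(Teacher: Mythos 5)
Your proof is correct and follows essentially the same route as the paper's own: applying the dual formulation \eqref{dual} in the output dimension, transferring the observables to the input side via the trace identity $\Tr\big(\sigma\,\Phi(\rho)\big)=\Tr\big(\Phi^*(\sigma)\,\rho\big)$, and using the hypothesis to bound the supremum over the image of $\Sigma_M$ by the supremum over $\Sigma_N$. Your explicit verification of the trace duality and the Hermiticity of $\Phi^*(\sigma)$, and your remark that no attainment of the supremum is needed, merely spell out details the paper leaves implicit.
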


\begin{proof}
Clearly, if $\sigma$ is a Hermitian matrix, then so is $\Phi^*(\sigma)$. 
From the dual formulation of the optimal quantum transport problem \eqref{dual}, we deduce
\begin{align*}
T^Q_E \big(\Phi(\rho^A),\Phi(\rho^B) \big) & = \sup_{(\sigma^A, \sigma^B) \in \Sigma_M} \Tr ( \sigma^A \Phi(\rho^A) + \sigma^B \Phi(\rho^B) ) \notag \\
& = \sup_{(\sigma^A, \sigma^B) \in \Sigma_M} \Tr ( \Phi^*(\sigma^A) \rho^A + \Phi^*(\sigma^B) \rho^B ) \notag \\
& \leq \sup_{(\eta^A, \eta^B) \in \Sigma_N} \Tr \big( \eta^A \rho^A + \eta^B \rho^B \big) = T^Q_E(\rho^A, \rho^B).
\end{align*}
\end{proof}

\section{Monotonicity under arbitrary single-qubit channels}\label{sec:mono2}

Let us now focus on the single qubit case, $N=2$. If we write a state $\rho \in \Omega_2$ using its Bloch vector, $\rho(\vec{r}) = \frac{1}{2} \big( \bone + \vec{r} \cdot \vec{\sigma} \big)$, then the action of any quantum channel $\Phi: \Omega_2 \to \Omega_2$ can be expressed \cite{KR01} in terms of a real 3 $\times$ 3 matrix $M$ and a translation vector $\vec{t} \in \mathbb{R}^3$,
\begin{align*}
\Phi(\rho(\vec{r})) = \rho \big( M \vec{r} + \vec{t}\, \big),
\end{align*}
such that $\Vert M \vec{r} + \vec{t} \, \Vert \leq 1$. Furthermore, the matrix $M$ can be diagonalised with the help of two unitary rotations $U, V \in \mathrm{U}(2)$ as follows,
\begin{align}\label{phid}
\Phi(\rho) = U \big[  \Phi_d \big( V \rho V^\dagger \big) \big]  U^\dagger,
\end{align}
where $\Phi_d$ is a quantum channel with a diagonal matrix $M$.

In \cite{RSW01} it was shown that any such single-qubit map $\Phi_d$, which belongs to the closure of the set of extremal quantum channels, can be realised with two Kraus operators
\begin{equation}\label{k2}
\begin{aligned}
K_1 & = \left[\cos\left(\tfrac{1}{2} v\right)\cos\left(\tfrac{1}{2} u\right)\right] \bone + \left[\sin\left(\tfrac{1}{2} v\right)\sin\left(\tfrac{1}{2} u\right)\right] \sigma_3 \ ,\\
K_2 & = \left[\sin\left(\tfrac{1}{2} v\right)\cos\left(\tfrac{1}{2} u\right)\right] \sigma_1 + i \left[\cos\left(\tfrac{1}{2} v\right)\sin\left(\tfrac{1}{2} u\right)\right] \sigma_2 \ ,
\end{aligned}
\end{equation}
with $u \in [0, 2 \pi)$ and $v \in [0,\pi)$. For such quantum channels we have the following technical result proven in Appendix \ref{sec:proofs}.
\begin{lem}\label{lem_F}
Let $\Phi: \Omega_2 \to \Omega_2$ be a CPTP map determined by Kraus operators \eqref{k2} and assume that $\sigma^A,\sigma^B \in H_2$ are diagonal. If $F = C^Q - \sigma^A \otimes \bone_N - \bone_N \otimes \sigma^B \geq 0$ then also $F^\Phi = C^Q - \Phi^*(\sigma^A) \otimes \bone_N - \bone_N \otimes \Phi^*(\sigma^B) \geq 0$.
\end{lem}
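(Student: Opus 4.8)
The plan is to verify, in this diagonal setting, the hypothesis of Proposition \ref{prop_phi}: that the dual map $\Phi^*$ sends the constraint region $\Sigma_2$ into itself. First I would compute $\Phi^*(\sigma)$ for a diagonal $\sigma=\diag(s_1,s_2)$ directly from \eqref{k2}. Since $K_1$ is diagonal and $K_2$ antidiagonal, $\Phi^*(\sigma)=K_1^\dagger\sigma K_1+K_2^\dagger\sigma K_2$ is again diagonal, and its entries are obtained from $(s_1,s_2)$ by the real matrix
\[ T=\begin{pmatrix}\cos^2\tfrac{v-u}{2} & \sin^2\tfrac{v-u}{2}\\[2pt] \sin^2\tfrac{v+u}{2} & \cos^2\tfrac{v+u}{2}\end{pmatrix}. \]
The trace-preservation identity $\sum_i K_i^\dagger K_i=\bone$ is precisely the statement that the rows of $T$ sum to one, i.e. $T$ is row-stochastic; this is the structural fact that drives the whole argument.

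Writing $\sigma^A=\diag(a_1,a_2)$, $\sigma^B=\diag(b_1,b_2)$ and their images $\diag(a_1',a_2')$, $\diag(b_1',b_2')$, I would exploit that $C^Q=\tfrac12(\bone-S)$ is the rank-one projector onto $\vert\psi^-\rangle$ to block-diagonalize both $F$ and $F^\Phi$ in the computational basis. Each operator splits into two scalars, on $\vert 00\rangle$ and $\vert 11\rangle$, and a single $2\times2$ block on $\mathrm{span}\{\vert 01\rangle,\vert 10\rangle\}$. Positivity of $F$ then amounts to $a_1+b_1\le0$, $a_2+b_2\le0$, together with positivity of the block with diagonal $\bigl(\tfrac12-a_1-b_2,\ \tfrac12-a_2-b_1\bigr)$ and off-diagonal $-\tfrac12$. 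The two scalar conditions for $F^\Phi$ are immediate, since each $a_i'+b_i'$ is a row of $T$ applied to $(a_1+b_1,\,a_2+b_2)$, hence a convex combination of two nonpositive numbers. As $\Phi^*$ leaves the off-diagonal entry $-\tfrac12$ untouched, the entire claim collapses to one scalar inequality: writing $p'=\tfrac12-a_1'-b_2'$ and $q'=\tfrac12-a_2'-b_1'$, show $p'q'\ge\tfrac14$, knowing that the corresponding product for $F$ obeys $pq\ge\tfrac14$.

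This determinant inequality is, as I see it, the real obstacle, and I would dispatch it by a direct reduction. Introducing $R=-\Tr(\sigma^A+\sigma^B)$, $s=(a_1+b_2)-(a_2+b_1)$ and $D=(a_1+b_1)-(a_2+b_2)$, together with $\mu=\cos(v-u)$ and $\nu=\cos(v+u)$, a short computation recasts the four hypotheses as $(R+1)^2-s^2\ge1$ (from $pq\ge\tfrac14$), $R\ge0$ and $\lvert D\rvert\le R$ (from the two scalar constraints), and $\lvert\mu\rvert,\lvert\nu\rvert\le1$, while the target becomes
\[ \Bigl((R+1)-\tfrac12(\mu-\nu)D\Bigr)^2-\tfrac14(\mu+\nu)^2 s^2\ \ge\ 1. \]
One first checks $(R+1)-\tfrac12(\mu-\nu)D\ge1>0$, so that squaring is monotone; then, bounding $s^2\le(R+1)^2-1$ and $\lvert D\rvert\le R$, and using the elementary identity $\lvert\mu+\nu\rvert+\lvert\mu-\nu\rvert=2\max(\lvert\mu\rvert,\lvert\nu\rvert)\le2$ to select the worst case, the left-hand side minus one factors as $R\,a\,(1-\tfrac a2)$ with $a=\lvert\mu-\nu\rvert\in[0,2]$, which is manifestly nonnegative. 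This yields $p'q'\ge\tfrac14$, and hence $F^\Phi\ge0$.

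I anticipate that the genuinely delicate points are the faithful translation of the positivity constraints into the variables $R,s,D$ and the verification that $(R+1)-\tfrac12(\mu-\nu)D$ is positive before squaring (which rests on $\lvert\tfrac12(\mu-\nu)D\rvert\le R$); once these are in place the final factorization is routine. It is worth noting that the diagonality hypothesis is essential here: without it the off-diagonal structure of $\Phi^*(\sigma^A)\otimes\bone+\bone\otimes\Phi^*(\sigma^B)$ couples the $\vert 00\rangle,\vert 11\rangle$ sector to the singlet, and the clean block reduction — and with it the elementary $2\times2$ estimate — breaks down.
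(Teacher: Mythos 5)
Your proposal is correct --- I verified each step, including the two points you flagged: $\bigl\vert\tfrac12(\mu-\nu)D\bigr\vert\le R$ does give $p'+q'=(R+1)-\tfrac12(\mu-\nu)D\ge1>0$ (which also supplies positivity of the central block's diagonal, so that $p'q'\ge\tfrac14$ alone closes the argument), and the final algebra is exact: $\bigl[1+R\bigl(1-\tfrac a2\bigr)\bigr]^2-1-\tfrac14(2-a)^2(R^2+2R)=Ra\bigl(1-\tfrac a2\bigr)$. Structurally you follow the same skeleton as the paper's proof in Appendix \ref{sec:proofs}: block-diagonalize $F$ and $F^\Phi$ into two scalars plus the central $2\times2$ block, and check the diagonal entries and the single nontrivial minor. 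Indeed your variables $(R,s,D,\mu,\nu)$ are exactly the paper's $(-x,\,c-d,\,c+d,\,\alpha,\,\beta)$, your hypotheses $R\ge0$, $\vert D\vert\le R$, $(R+1)^2-s^2\ge1$ reproduce the constraints \eqref{parF}--\eqref{parF2}, and $4p'q'\ge1$ is precisely the paper's condition $F_{22}^{\Phi}F_{33}^{\Phi}-1\ge0$. The execution, however, differs in two worthwhile ways. First, you get the scalar conditions $F^\Phi_{11},F^\Phi_{44}\ge0$ in one line from the row-stochasticity of $T$ (a structural consequence of $\Phi^*(\bone)=\bone$ together with $K_1$ diagonal and $K_2$ antidiagonal), where the paper argues by sign casework on $\alpha$ and separately on $\alpha\pm\beta$ for $F^\Phi_{22},F^\Phi_{33}$. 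Second, for the minor the paper runs a chain of estimates with $\pm$ chosen as the sign of $\alpha+\beta$, terminating in $x^2(1\mp\alpha)(1\pm\beta)-x[\,\cdots]\ge0$, whereas you eliminate all casework via the identity $\vert\mu+\nu\vert+\vert\mu-\nu\vert=2\max(\vert\mu\vert,\vert\nu\vert)\le2$ and land on the manifestly nonnegative slack $Ra\bigl(1-\tfrac a2\bigr)$ with $a=\vert\mu-\nu\vert$. Your version buys a casework-free proof with an explicit closed-form remainder quantifying how far $F^\Phi$ is from the positivity boundary (vanishing exactly when $R=0$, $a=0$ or $a=2$, i.e.\ for unitary-like extremes); the paper's version stays closer to the raw matrix entries \eqref{Fphi} at the cost of four sign cases. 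Your closing observation is also accurate: diagonality of $\sigma^A,\sigma^B$ is what keeps $\Phi^*(\sigma^A)\otimes\bone+\bone\otimes\Phi^*(\sigma^B)$ from coupling the $\vert00\rangle,\vert11\rangle$ sector to the singlet, and it is licensed in the application by Lemma \ref{lem_sim_diag}.
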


It turns out that the matrices $\sigma^A, \sigma^B$ yielding the maximum in \eqref{dual} actually commute and thus can be simultaneously diagonalised.
\begin{lem}\label{lem_sim_diag}
Let  $\rho^A$, $\rho^B \in \Omega_2$ be mixed quantum states 
of full rank and with different spectra, and let $\us^A$, $\us^B$ be two Hermitian matrices saturating the dual problem \eqref{dual}, i.e.
\begin{equation*}
    T^Q(\rho^A,\rho^B)  = \Tr \big( \us^A \rho^A + \us^B \rho^B \big).
\end{equation*}
Then, $\us^A$ and $\us^B$ commute.
\end{lem}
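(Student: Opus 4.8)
The plan is to combine semidefinite duality with an explicit four-dimensional eigenvalue analysis. Writing the optimal dual variables in the Pauli basis, $\us^A = a_0\bone + \vec a\cdot\vec\sigma$ and $\us^B = b_0\bone + \vec b\cdot\vec\sigma$ with $\vec a,\vec b\in\mathbb R^3$, and using $[\vec a\cdot\vec\sigma,\vec b\cdot\vec\sigma]=2i(\vec a\times\vec b)\cdot\vec\sigma$, the commutativity claim $[\us^A,\us^B]=0$ is equivalent to $\vec a\times\vec b=0$, i.e. to $\vec a$ and $\vec b$ being proportional. To access the optimizer I would first secure strong duality and complementary slackness: since $\rho^A,\rho^B$ have full rank, the product $\rho^A\otimes\rho^B$ is a strictly positive coupling in $\Gamma^Q(\rho^A,\rho^B)$, so Slater's condition holds for the primal problem \eqref{W}. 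Hence there is no duality gap, the primal minimum is attained at some coupling $\rho^{AB}$, and for the dual optimizer $(\us^A,\us^B)$ one has $F\rho^{AB}=0$ with $F = C^Q - \us^A\otimes\bone - \bone\otimes\us^B\ge0$ as in \eqref{SN}; in particular $\mathrm{range}\,\rho^{AB}\subseteq\ker F$.

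The hypotheses enter decisively in the next step. A pure bipartite state has, via its Schmidt decomposition, two isospectral marginals; since $\Tr_B\rho^{AB}=\rho^A$ and $\Tr_A\rho^{AB}=\rho^B$ have \emph{different} spectra by assumption, $\rho^{AB}$ cannot be pure, so $\mathrm{rank}\,\rho^{AB}\ge2$ and therefore $\dim\ker F\ge2$, i.e. $\mathrm{rank}\,F\le2$. Writing $F = d\,\bone - M$ with $d=\tfrac12 - a_0 - b_0$ and $M = \tfrac12 S + \vec a\cdot\vec\sigma\otimes\bone + \bone\otimes\vec b\cdot\vec\sigma$, the requirement $F\ge0$ together with $\mathrm{rank}\,F\le2$ forces the top eigenvalue of the Hermitian matrix $M$ to be degenerate, with multiplicity at least two (one must have $d=\lambda_{\max}(M)$ and $\dim\ker F=\mathrm{mult}(\lambda_{\max})\ge2$).

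It remains to show that such a degeneracy forces $\vec a$ and $\vec b$ to be proportional. If either vector vanishes the corresponding $\us$ is a multiple of $\bone$ and commutes trivially, so assume both are nonzero. Because $(U\otimes U)S(U^\dagger\otimes U^\dagger)=S$, conjugating $M$ by $U\otimes U$ leaves its spectrum invariant while rotating $\vec a$ and $\vec b$ simultaneously; I would use this to rotate $\vec a$ onto the $z$-axis and $\vec b$ into the $xz$-plane, reaching $\vec a=(0,0,\alpha)$ and $\vec b=(\beta_1,0,\beta_3)$, where non-proportionality is exactly the condition $\beta_1\ne0$. In the ordered basis $\{\vert 00\rangle,\vert 01\rangle,\vert 10\rangle,\vert 11\rangle\}$ the matrix $M$ is then real symmetric and tridiagonal with off-diagonal entries $(\beta_1,\tfrac12,\beta_1)$. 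If $\beta_1\ne0$ all three off-diagonal entries are nonzero, so $M$ is an unreduced symmetric tridiagonal matrix and thus has a simple spectrum; its largest eigenvalue is then nondegenerate, contradicting $\mathrm{rank}\,F\le2$. Hence $\beta_1=0$, the vectors $\vec a,\vec b$ are proportional, and $\us^A,\us^B$ commute.

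The main obstacle is the passage from optimality to the rank bound $\mathrm{rank}\,F\le2$: it demands both the correct duality setup (strong duality and complementary slackness, supplied by the full-rank Slater point) and the observation that distinct marginal spectra rule out a pure optimal coupling — this is precisely where the two hypotheses are used. Once $\mathrm{rank}\,F\le2$ is established, the reduction to an unreduced tridiagonal form and the classical simplicity of its spectrum close the argument; the only point requiring care is that the $U\otimes U$ reduction be applied to an arbitrary (possibly non-unique) dual optimizer, which is legitimate since both the rank condition and proportionality of $\vec a,\vec b$ are invariant under simultaneous rotation.
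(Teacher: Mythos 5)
Your proof is correct, and it is worth noting that the paper itself contains no argument for Lemma \ref{lem_sim_diag}: it only cites the proof of Theorem 5.1 in \cite{transport2}, where the commutativity of the dual optimizers emerges in the course of an explicit, computation-heavy solution of the qubit dual SDP that simultaneously produces the semi-analytic formula \eqref{tq_org}. Your route is self-contained and genuinely different, isolating exactly the commutativity statement. The chain is sound: compactness of $\Gamma^Q(\rho^A,\rho^B)$ gives primal attainment; zero duality gap follows either from your Slater point $\rho^A\otimes\rho^B>0$ or simply from the equality \eqref{dual} already quoted in the paper, whence $\Tr(F\rho^{AB})=0$ and, since $F\geq 0$ and $\rho^{AB}\geq 0$, $F\rho^{AB}=0$; the distinct-spectra hypothesis rules out a pure optimal coupling via the Schmidt decomposition, forcing $\mathrm{rank}\,\rho^{AB}\geq 2$, hence $\dim\ker F\geq 2$, so that $d=\lambda_{\max}(M)$ with multiplicity at least two for $M=\tfrac12 S+\vec a\cdot\vec\sigma\otimes\bone+\bone\otimes\vec b\cdot\vec\sigma$. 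The $U\otimes U$ reduction is legitimate because $(U\otimes U)S(U^\dagger\otimes U^\dagger)=S$ and $U\mapsto R_U$ covers $\mathrm{SO}(3)$, and I verified your matrix: with $\vec a=(0,0,\alpha)$, $\vec b=(\beta_1,0,\beta_3)$, in the ordered basis $\{\vert 00\rangle,\vert 01\rangle,\vert 10\rangle,\vert 11\rangle\}$ one gets a real symmetric tridiagonal $M$ with diagonal $\big(\tfrac12+\alpha+\beta_3,\ \alpha-\beta_3,\ -\alpha+\beta_3,\ \tfrac12-\alpha-\beta_3\big)$ and off-diagonal $(\beta_1,\tfrac12,\beta_1)$, so $\beta_1\neq 0$ would make it an unreduced Jacobi matrix with simple spectrum (e.g., deleting the first row and last column of $M-\lambda\bone$ leaves a triangular block with diagonal $(\beta_1,\tfrac12,\beta_1)$, so every eigenvalue has geometric multiplicity one), contradicting the degeneracy. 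What each approach buys: the computation in \cite{transport2} yields the full value of $T^Q$ for qubits as a byproduct, whereas your argument is shorter, makes visible exactly where the two hypotheses (full rank, distinct spectra) enter, and relies only on standard ingredients (SDP complementary slackness, simplicity of eigenvalues of unreduced symmetric tridiagonal matrices). Two cosmetic remarks: the Slater condition is dispensable given that the paper asserts \eqref{dual} with attainment for mixed states, and your equivalence ``non-proportionality $\Leftrightarrow\beta_1\neq 0$'' uses $\alpha=\vert\vec a\vert\neq 0$, which your separate treatment of the cases $\vec a=0$ or $\vec b=0$ duly covers.
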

\begin{proof}
This fact is shown in the course of the proof of Theorem 5.1 in \cite{transport2}.
\end{proof}

We are now in a position to present the main result of this paper.

\begin{thm}\label{thm:qBit_monoton}
The optimal quantum transport cost is monotonous under all CPTP maps $\Phi: \Omega_2 \to \Omega_2$.
\end{thm}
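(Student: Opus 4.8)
The plan is to prove the reverse monotonicity via the dual formulation \eqref{dual}, reducing the class of channels as far as possible and then feeding the structural Lemmas \ref{lem_F} and \ref{lem_sim_diag} into the criterion of Proposition \ref{prop_phi}.

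\emph{Reduction of the channel.} First I would shrink the class of maps to be considered. By Proposition \ref{lem_comb2} it suffices to establish the inequality for extremal CPTP maps, and by continuity of $T^Q$ for maps in the closure of the extremal set. For such a $\Phi$ the canonical form \eqref{phid} gives $\Phi = \mathrm{Ad}_U \circ \Phi_d \circ \mathrm{Ad}_V$ with $\Phi_d$ carrying a diagonal matrix $M$; since $T^Q$ is unitarily invariant for the projective cost $C^Q$, the two rotations $U,V$ cancel on both sides of the monotonicity inequality, so it is enough to treat $\Phi_d$ for \emph{arbitrary} input states. Finally, by \eqref{k2} I may assume $\Phi_d$ is given by the two Kraus operators parametrised by $(u,v)$; note these are real matrices and satisfy $\sigma_3 K_i \sigma_3 = \pm K_i$, so $\Phi_d$ (and its adjoint $\Phi_d^*$) is covariant under conjugation by $\sigma_3$.

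\emph{The diagonal case.} By Proposition \ref{prop_phi}, monotonicity under $\Phi_d$ follows once I show the feasibility transfer $(\sigma^A,\sigma^B)\in\Sigma_2 \Rightarrow (\Phi_d^*\sigma^A,\Phi_d^*\sigma^B)\in\Sigma_2$, and Lemma \ref{lem_F} supplies exactly this when $\sigma^A,\sigma^B$ are diagonal. Two observations make the diagonal case decisive. Since $(\sigma_3\otimes\sigma_3)C^Q(\sigma_3\otimes\sigma_3)=C^Q$, the convex set $\Sigma_2$ is invariant under $(\sigma^A,\sigma^B)\mapsto(\sigma_3\sigma^A\sigma_3,\sigma_3\sigma^B\sigma_3)$, so averaging a feasible pair with its conjugate shows that the diagonal parts $(\diag\sigma^A,\diag\sigma^B)$ are again feasible; and the covariance of $\Phi_d^*$ gives $\diag(\Phi_d^*\sigma)=\Phi_d^*(\diag\sigma)$. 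For commuting (hence simultaneously diagonal) inputs this already closes the argument: the dual objective $\Tr(\Phi_d^*\sigma^A\rho^A+\Phi_d^*\sigma^B\rho^B)$ only sees the diagonal parts of $\Phi_d^*\sigma^{A},\Phi_d^*\sigma^{B}$, so one may replace each $\sigma$ by its diagonal part, which is feasible and diagonal, and apply Lemma \ref{lem_F} to conclude $T^Q(\Phi_d(\rho^A),\Phi_d(\rho^B))\le T^Q(\rho^A,\rho^B)$.

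\emph{The general case and the main obstacle.} For arbitrary, non-commuting inputs I would invoke Lemma \ref{lem_sim_diag} applied to the \emph{output} pair: for generic outputs (full rank, distinct spectra) the matrices $\us^A,\us^B$ saturating \eqref{dual} for $(\Phi_d(\rho^A),\Phi_d(\rho^B))$ commute, hence share an eigenbasis $W$. The degenerate configurations — pure or isospectral outputs — I would dispatch separately, the pure case via the reverse monotonicity of the SWAP-fidelity \eqref{fS} (which coincides with the reverse-monotone Uhlmann–Jozsa fidelity when one state is pure) and the isospectral case by continuity. The genuine difficulty is that this common eigenbasis $W$ is generally tilted relative to the computational basis, and the extremal family \eqref{k2} is rigid under rotations — only conjugation by $\sigma_1$ and $\sigma_3$ maps it to itself — so one cannot rotate $W$ onto the diagonal while keeping Lemma \ref{lem_F} applicable. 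I therefore expect the crux to be a direct verification of $(\Phi_d^*\us^A,\Phi_d^*\us^B)\in\Sigma_2$ for dual variables diagonal in a tilted basis: using the explicit two-Kraus form of $\Phi_d^*$ and the commutativity of $\us^A,\us^B$, I would parametrise the tilt by a single angle, exploit the discrete $\sigma_1,\sigma_3$ symmetries to restrict that angle to a fundamental domain, and reduce the required $4\times 4$ positivity of $F^\Phi = C^Q - \Phi_d^*\us^A\otimes\bone - \bone\otimes\Phi_d^*\us^B$ to a tractable one-parameter family of conditions. This off-diagonal positivity check is the step I expect to be the main obstacle; the channel reductions and the commuting case above are comparatively routine.
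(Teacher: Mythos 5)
Your skeleton coincides with the paper's proof almost step for step: unitary invariance to pass to the canonical channels \eqref{phid}, Proposition \ref{lem_comb2} to restrict to extremal maps with Kraus operators \eqref{k2}, the dual criterion of Proposition \ref{prop_phi} with Lemma \ref{lem_F} as the feasibility-transfer engine, Lemma \ref{lem_sim_diag} for the commuting optimizers, and continuity for the degenerate configurations. Your $\sigma_3$-covariance averaging argument (replacing a feasible pair by its diagonal part) is a correct and pleasant supplement that the paper does not spell out. But as a proof of Theorem \ref{thm:qBit_monoton} the proposal is incomplete, and the gap is the one you yourself flag: for generic inputs the commuting dual optimizers $\us^A, \us^B$ are diagonal only in a tilted basis, and everything you offer for that case --- parametrising the tilt angle, exploiting the $\sigma_1,\sigma_3$ symmetries, reducing the $4\times 4$ positivity of $F^\Phi$ to a one-parameter family --- is a plan, not an argument; no positivity is ever verified. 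Note moreover that the same rigidity narrows your ``decisive'' diagonal case: commuting inputs are simultaneously diagonalisable only in some basis, generally not the computational basis in which \eqref{k2} is written, so your averaging argument in fact covers only inputs diagonal in that specific basis.

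The paper closes exactly the step you declared the main obstacle \emph{without} any off-diagonal positivity computation, by rotating the states rather than the channel. With $V$ a unitary diagonalising the commuting optimizers, unitary invariance gives $T^Q\big(\Phi(\rho^A),\Phi(\rho^B)\big) = T^Q\big(V^\dagger\Phi(\rho^A)V,\,V^\dagger\Phi(\rho^B)V\big)$; pairing the optimizers against the rotated outputs and passing through the dual via cyclicity of the trace yields $\Tr\big(\us^A V^\dagger\Phi(\rho^A)V + \us^B V^\dagger\Phi(\rho^B)V\big) = \Tr\big(\Phi^*(V\us^A V^\dagger)\rho^A + \Phi^*(V\us^B V^\dagger)\rho^B\big)$, after which Lemma \ref{lem_F} certifies that the pair $\big(\Phi^*(V\us^A V^\dagger),\Phi^*(V\us^B V^\dagger)\big)$ lies in $\Sigma_2$ and the supremum in \eqref{dual} bounds the result by $T^Q(\rho^A,\rho^B)$; isospectral and pure cases are then absorbed purely by continuity of $T^Q$ and $\Phi$, with no detour through the SWAP-fidelity (your fidelity route for pure outputs would additionally require the inequality $F_S \leq F$, which you did not justify). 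So your instinct that the tilted eigenbasis is the crux was well aimed, but the idea missing from your proposal is that the diagonalising rotation never needs to touch the Kraus family \eqref{k2}: it is shifted onto the states by unitary invariance and onto the dual variables by the trace --- at the price of careful bookkeeping of which pair of states the $\us$'s optimise and in which basis they are diagonal, which is precisely the delicate accounting your one-parameter positivity programme would have had to replace.
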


\begin{proof}
Thanks to the unitary invariance of the transport cost $T^Q$ we can restrict ourselves to channels $\Phi_d$ defined by Eq. \eqref{phid}. Furthermore, by Proposition \ref{lem_comb2}, it is sufficient to check the monotonicity for extremal single-qubit channels. Such channels are completely characterised by two Kraus operators \eqref{k2}, with $u \in [0,2\pi)$, $v \in [0,\pi)$ such that either $\sin u \cdot \sin v \neq 0$ or $\vert \cos u \cdot \cos v \vert < 1$ (see Theorem 1.4 in \cite{RSW01}). We can thus use the characterisation \eqref{k2} and apply Lemma \ref{lem_sim_diag}.

Now, assume that the states $\rho^A, \rho^B \in \Omega_2$ are non-isospectral and neither of them is pure. By Lemma \ref{lem_sim_diag} we know that the Hermitian matrices saturating the dual problem, $\us^A, \us^B$, commute. Let then $V \in U(2)$ be the matrix diagonalizing simultaneously $\us^A$ and $\us^B$. Using the unitary invariance of $T^Q$ we obtain
\begin{align*}
T^Q \big( \Phi(\rho^A),\Phi(\rho^B) \big) & = T^Q \Big( V^\dagger \Phi(\rho^A) V, V^\dagger \Phi(\rho^B) V \Big) \\
& = \max_{(\sigma^A, \sigma^B) \in \Sigma_2} \Tr ( \sigma^A V^\dagger \Phi(\rho^A) V + \sigma^B V^\dagger \Phi(\rho^B) V) \\
& = \Tr ( \us^A V^\dagger \Phi(\rho^A) V + \us^B V^\dagger \Phi(\rho^B) V) \\
& = \Tr ( \Phi^*(V \us^A V^\dagger) \rho^A + \Phi^*(V \us^B V^\dagger) \rho^B ).
\end{align*}
Now, Lemma \ref{lem_F} implies that $\big( \Phi^*(V \us^A V^\dagger), \Phi^*(V \us^B V^\dagger) \big) \in \Sigma_2$, hence we conclude that
\begin{align*}
T^Q \big( \Phi(\rho^A),\Phi(\rho^B) \big) & = \Tr ( \Phi^*(V \us^A V^\dagger) \rho^A + \Phi^*(V \us^B V^\dagger) \rho^B ) \\
& \leq \sup_{(\eta^A, \eta^B) \in \Sigma_2} \Tr \big( \sigma^A \rho^A + \sigma^B \rho^B \big) = T^Q(\rho^A,\rho^B ).
\end{align*}
This inequality extends by the continuity of $T^Q$ and $\Phi$ to the limiting cases of isospectral and pure states $\rho^A, \rho^B$.
\end{proof}


\section{Solutions to the single-qubit transport problem}\label{sec:unique}

In this section, we study the solutions to original \eqref{TQ} and dual \eqref{dual} optimal quantum transport problems for single-qubit states.

Let us start with some general remarks, which apply in any dimension $N$. While the original problem always has at least one solution yielding the minimum in \eqref{TQ}, the supremum in the dual problem \eqref{dual} is attained only if both of the states of the states $\rho^A$, $\rho^B$ are positive definite (see \cite[Theorem 3.2]{transport2}). On the other hand, if at least one of the states is pure, then there exists only one coupling, $\rho^{AB} = \rho^A \otimes \rho^B$ --- see \cite[Lemma A.3]{transport2}. Note also that the solution of the dual problem \eqref{dual} is always given up to a shift: $(\sigma^A, \sigma^B) \to (\sigma^A - c \bone_2, \sigma^B + c \bone_2)$, where $c \in \fR$ is an arbitrary constant. Such a shift changes neither the value of the transport cost nor the matrix $F \vc C_E - \sigma^A \otimes \bone_N - \bone_N \otimes \sigma^B$ defined in \eqref{SN}. Consequently, by a unique solution to the dual problem we shall always mean the uniqueness of the matrix $F$.

We now show the uniqueness of the solution to the optimal quantum transport for nonisospectral mixed qubits \footnote{We are indebted to Shmuel Friedland for providing us arguments leading to a shorter proof of this result.}. To this end, we first need the following result.

\begin{lem}\label{lem_rank}
Let  $\rho^A, \rho^B \in \Omega_2$ be of rank two. For such states, let $\ur^{AB} \in \Gamma(\rho^A,\rho^B)$ be an optimal coupling yielding the minimum in \eqref{TQ} and let $\uF$ be the optimal matrix solving the dual problem \eqref{dual}. Then, $\rank \ur^{AB} = \rank \underline{F} = 2$.
\end{lem}
\begin{proof}
This fact is shown in the course of the proof of Theorem 5.1 in \cite{transport2}.
\end{proof}

\begin{thm}\label{thm_uniq_sol}
Let  $\rho^A, \rho^B \in \Omega_2$ be two non-isospectral mixed density matrices. Then, there exists a unique solution to the original and the dual quantum transport problem.
\end{thm}
\begin{proof}
Assume that both $\rho^A, \rho^B$ are of full rank. Lemma \ref{lem_rank} shows that any optimal coupling $\ur^{AB}$ has rank two. Suppose that $\ur^{AB}_{\,1}$ and $\ur^{AB}_{\,2}$ are two different optimal solutions to the original problem. Suppose first that $\ur^{AB}_{\,1}$ and $\ur^{AB}_{\,2}$ have different supports. Then, for any $a \in (0,1)$, the state $\ur^{AB}_{\,a} \vc a \ur^{AB}_{\,1} + (1-a) \ur^{AB}_{\,2}$ is an optimal coupling for $\rho^A, \rho^B$. But $\rank \ur^{AB}_{\,a} \geq 3$, what contradicts Lemma \ref{lem_rank}. Suppose then that $\ur^{AB}_{\,1}$ and $\ur^{AB}_{\,2}$ are both supported by the same two-dimensional subspace. Now, set $\ur^{AB}_{\,t} \vc -t \ur^{AB}_{\,1} + (1+t) \ur^{AB}_{\,2}$ for any $t>0$. Any such $\ur^{AB}_{\,t}$ is an optimal coupling for $\rho^A, \rho^B$.
But since the set of density matrices is bounded, there exists a maximal $t$ such that for each $t' > t$ the coupling $\ur^{AB}_{\,t'}$ has one negative eigenvalue. Hence $\ur^{AB}_{\,t}$ has only one nonzero eigenvalue, which contradicts Lemma \ref{lem_rank}.

The same argument shows the uniqueness of the optimal matrix $\uF$ solving the dual problem.
\end{proof}

We shall now present an explicit formula for the optimal coupling of two commuting qubits. Using the Bloch vectors,
\begin{align}\label{Bloch}
\rho^A = \frac{1}{2} \big( \bone + \vec{a} \cdot \vec{\sigma} \big), && \rho^B \vc \frac{1}{2} \big( \bone + \vec{b} \cdot \vec{\sigma} \big),
\end{align}
it is convenient to write an element of $\Gamma(\rho^A,\rho^B)$ in the Fano form \cite{Fano}
\begin{equation}
\label{eqStates_01}
\rho^{AB} = \frac{1}{4}\left(\bone_4 + \sum_{i = 1}^3 a_i \sigma_i \otimes \bone_2 + \sum_{i = 1}^3 b_i \bone_2 \otimes \sigma_i + \sum_{i,j = 1}^3 R_{ij} \sigma_i \otimes \sigma_j  \right).
\end{equation}
It is straightforward to show that the quantum transport cost depends only on the trace of the correlation matrix,
\begin{align}\label{costR}
\Tr C^Q \rho^{AB} = \tfrac{1}{4} \left( 1 - \Tr R \right).
\end{align}
Note however that the requirement of positive semi-definiteness of $\rho^{AB}$ induces constraints on the matrix $R$ --- see Appendix \ref{app:Fano}. Consequently, the minimisation of \eqref{costR} is performed over a complicated convex subset of the 9-dimensional parameter space of $R$.

If $\rho^A$ and $\rho^B$ commute then their Bloch vectors can be aligned via a unitary transformation and hence we can assume that only $a_3$ and $b_3$ are non-zero. We claim that the correlation matrix $R$ of the optimal coupling $\ur^{AB}$ is diagonal with entries
\begin{equation}
\label{eqStates_14}
\begin{aligned}
& R_{11} = R_{22} = \frac{1}{2} \sqrt{(\vert a_3 + b_3\vert-2)^2 - (a_3 - b_3)^2}, 
&  R_{33} = \vert a_3 + b_3\vert -1.
\end{aligned}
\end{equation}
Indeed, by inserting \eqref{eqStates_14} to formula \eqref{costR}, for $a_3 + b_3 > 0$ one gets 
\begin{align*}
\Tr C^Q \underline{\rho}^{AB}
= \frac{1}{4}\left(\sqrt{1 - a_3} - \sqrt{1 - b_3} \right)^2 
\end{align*}
and in the case of $a_3 + b_3 < 0$ one gets 
\begin{align*}
\Tr C^Q \underline{\rho}^{AB} 
= \frac{1}{4}\left(\sqrt{1 + a_3} - \sqrt{1 + b_3} \right)^2.
\end{align*}
Equivalently, we can write
\begin{align*}
\Tr C^Q \underline{\rho}^{AB} 
= \frac{1}{4} \max \left( \left(\sqrt{1 - a_3} - \sqrt{1 - b_3} \right)^2,  \left(\sqrt{1 + a_3} - \sqrt{1 + b_3} \right)^2 \right),
\end{align*}
which agrees with Eq. \eqref{Tcomm} derived in \cite{transport1}.
The optimal coupling for commuting qubit states is of rank 2, as follows from Lemma \ref{lem_rank}, and its eigenvalues are $\tfrac{1}{2} \vert a_3 + b_3\vert$ and $1-\tfrac{1}{2} \vert a_3 + b_3\vert$.

Given the optimal coupling, $\ur^{AB}$ one can compute the optimal observables, $\us^A$ and $\us^B$, saturating the dual problem \eqref{dual}. To do so one may use Theorem 3.2 from \cite{transport2} which states that for any optimal coupling $\underline{\rho}^{AB}$ and any optimal $\underline{F}$ the following equality holds:

\begin{equation}
\label{diff_support}
    \Tr[\underline{F} \underline{\rho}^{AB}] = 0.
\end{equation}

Thanks to the non-negativity of $\underline{\rho}^{AB}$ and $\underline{F}$ this is equivalent to the statement that $\underline{\rho}^{AB}$ and $\underline{F}$ are supported on the orthogonal subspaces of the Hilbert space $\mathbb{C}^4$. 
By inserting the solution of the original problem \eqref{eqStates_14} and a general form of $F$ \eqref{SN} into \eqref{diff_support} we obtained a system of linear equations with the unique solution:

\begin{equation}
\label{dual_com_sol2}
\begin{aligned}
\text{if ~} a_3 + b_3 > 0 \, , & \\
& \us^A = \frac{1}{8} \begin{pmatrix}
\frac{\vert a_3 - b_3\vert}{\sqrt{1-a_3}\sqrt{1-b_3}} & 0\\
0 & 4 - \frac{ 2(2 - a_3 - b_3)+ \vert a_3 - b_3\vert }{\sqrt{1-a_3}\sqrt{1-b_3}}
\end{pmatrix}, \\
& \us^B = \frac{1}{8} \begin{pmatrix}
\frac{-\vert a_3 - b_3\vert}{\sqrt{1-a_3}\sqrt{1-b_3}} & 0\\
0 & 4 - \frac{ 2(2 - a_3 - b_3) - \vert a_3 - b_3\vert }{\sqrt{1-a_3}\sqrt{1-b_3}}
\end{pmatrix}; \\[5 pt]
\end{aligned}
\end{equation}
\begin{equation}
\begin{aligned}
\text{if ~} a_3 + b_3 < 0 \, , & \\
& \us^A = \frac{1}{8} \begin{pmatrix}
4 - \frac{ 2(2 + a_3 + b_3) - \vert a_3 - b_3\vert }{\sqrt{1+a_3}\sqrt{1+b_3}} & 0\\
0 & \frac{ -\vert a_3 - b_3\vert}{\sqrt{1+a_3}\sqrt{1+b_3}}
\end{pmatrix}, \\
& \us^B = \frac{1}{8} \begin{pmatrix}
4 - \frac{ 2(2 + a_3 + b_3) + \vert a_3 - b_3\vert }{\sqrt{1+a_3}\sqrt{1+b_3}} & 0\\
0 & \frac{ \vert a_3 - b_3\vert}{\sqrt{1+a_3}\sqrt{1+b_3}}
\end{pmatrix}. \\
\end{aligned}
\end{equation}

\bigskip

Now, let us discuss the case of isospectral states.
We start by finding an exemplary solution to the quantum transport problem between the isospectral states $\rho^A$, $\rho^B = U \rho^A U^\dagger$. By unitary invariance of transport cost we may assume that state $\rho^A$ is diagonal, and state $\rho^B$ is real, so $U = \exp(i \phi \sigma_y /2)$.

We use the following \textit{anzatz} for the coupling matrix:

\begin{equation}\label{iso_ansatz}
    \rho^{AB} = (\bone_2 \otimes U) \rho^{AA}   (\bone_2 \otimes U^\dagger),
\end{equation}
where $\rho^{AA} = \rho^A \otimes \bone_2 + \bone_2 \otimes \rho^A + \sum_{i=1}^3 R'_{ii} \sigma_i \otimes \sigma_i$ is some coupling between $\rho^A$  and itself with a diagonal correlation matrix $R'$. 
Therefore the matrix of correlation coefficients $R$ for $\rho^{AB}$ has the form $R =  R' O^T$, where $O$ is the rotation matrix changing the Bloch vector $\vec{a}$ into the Bloch vector $\vec{b}$. Moreover, since $\rho^{AB}$ and $\rho^{AA}$ are connected by unitary map $\rho^{AB} \geq 0 \iff \rho^{AA} \geq 0$.

The transport cost is obtained by minimizing the formula:

\begin{equation}
\label{eqStates_31}
\Tr C^Q \rho^{AB} = \frac{1}{8}(1 + \sqrt{2} R'_- + R'_{33}) (1 - \cos(\phi)),
\end{equation}
with $R_-' = (R_{11}' - R_{22}')/\sqrt{2}$. The minimization of  \eqref{eqStates_31} over $R'_{33}$ and $R'_{-}$ under the constraints imposed by non-negativity of $\rho^{AB}$  (see Eq. \eqref{eqStates_21} in the Appendix) leads to:

\begin{equation*}
T_{C^Q}^Q = \frac{1}{2}\left(1 - \sqrt{1 - a_3^2}\right)\sin(\phi/2)^2,
\end{equation*}
which agrees with Eq. \eqref{Tiso} obtained in \cite{transport1}. The correlation matrix $R$ determining the optimal coupling $\underline{\rho}^{AB}$ has the following form:

\begin{equation}
\label{eqStates_33}
R = \begin{pmatrix}
-\sqrt{1 -  a_3^2} \cos(\phi) & 0 & \sqrt{1 -  a_3^2} \sin(\phi)\\
0 & \sqrt{1 - a_3^2} & 0 \\
\sin(\phi) & 0 & \cos(\phi)
\end{pmatrix}
\end{equation}

The optimal coupling $\rho^{AB}$ has rank one and thus can be written as $\rho^{AB} = \vert \psi \rangle\langle \psi \vert$ with

\begin{equation*}
\vert \psi \rangle = \frac{1}{\sqrt{2}}
\begin{pmatrix}
\sqrt{1+a_3} \cos(\phi/2)\\
\sqrt{1+a_3} \sin(\phi/2)\\
\sqrt{1-a_3} \sin(\phi/2)\\
-\sqrt{1-a_3}\cos(\phi/2)
\end{pmatrix}~.
\end{equation*}

Since the construction of solutions for the dual problem, based on orthogonality of $\underline{F}$ and $\underline{\rho}^{AB}$ supports, is straightforward but tedious, we present it in the Appendix \ref{app:izo_dual}. We found a one-parameter family of solutions for the dual problem, such that for each $\phi$ there exist at least one optimal $\underline{F}$ with rank 3.
Eq. \eqref{diff_support} and the fact that both $\underline{F}$ and $\underline{\rho}^{AB}$ are positive semidefinite, together imply that for each pair of isospectral states $\rho^A$ and $\rho^B$, there exist only one optimal $\underline{\rho}^{AB} = \vert \psi\rangle\langle \psi\vert$, with $\vert\psi\rangle$ being the unique eigenvector to $0$ eigenvalue of above mentioned optimal $\underline{F}$. Hence, the original quantum optimal transport problem possesses only one solution, determined by the correlation matrix \eqref{eqStates_33}.


\textit{Acknowledgements:} It is a pleasure to thank Shmuel Friedland and Tomasz Miller for their helpful remarks. Financial support by Narodowe Centrum Nauki under the Maestro grant number DEC-
2015/18/A/ST2/00274 and by Foundation for Polish Science under the Team-Net
project no. POIR.04.04.00-00-17C1/18-00 is gratefully acknowledged.

\begin{appendices}

\section{Proofs of technical results}\label{sec:proofs}

\subsection{Semi-analytic formulae for $T^Q$ for $N=2$}

In this Appendix we present the proof of Proposition \ref{prop:F}, along with the semi-analytic and analytic formulae for the optimal quantum transport cost for qubits, which we invoked in Section \ref{sec:Riemman}.

We first recall a semi-analytic formula for $T^Q$ between two single-qubit states in the Bloch parametrisation \eqref{rho} derived in \cite{transport2}:
\begin{multline}\label{tq_org}
T^Q \big(\rho(s,0),\rho(r,\theta) \big)\\
 = \max_{\phi \in [0, 2\pi)} \tfrac{1}{4} \left(\sqrt{1+ (2 s-1) \cos \phi }-\sqrt{1+ (2 r  -1) \cos (\theta + \phi)}\right)^2.
\end{multline}
For the special case of commuting or isospectral states one can derive \cite{transport1} explicit analytic formulae:
\begin{align}\label{Tcomm}
T^Q \big(\rho(s,0),\rho(r,0) \big) & = \tfrac{1}{2} \max \left\{\left(\sqrt{r} - \sqrt{s}\right)^2, \left(\sqrt{1-r} - \sqrt{1-s}\right)^2 \right\},\\  \label{Tiso}
T^Q \big(\rho(r,0),\rho(r, \theta) \big) & = \left(\tfrac{1}{2} - \sqrt{r(1-r)} \right) \sin^2(\theta/2).
\end{align}

We shall now prove Proposition \eqref{prop:F} via the following lemma:

\begin{lem}
For any $r \in (0,1)$ and any tangent vector $v = (v_1,v_2) \in T_{\rho(r,0)}\Omega_2^\mathbb{R}$ we have
\begin{align}\label{TQt}
T^Q \big(\rho(r,0),\rho(r + v_1 t,v_2 t) \big) = G(r,v) \, t^2 + O(t^3), \quad \text{ as } t \downarrow 0,
\end{align}
where the function $G$ is defined by formula \eqref{our_F},
\begin{align*}
G(r,v) = \max_{\phi \in [0,2\pi)}  \frac{\big( 2 v_1 \cos(\phi) - (2r - 1) v_{2}\sin(\phi)  \big)^2}{16 \big( 1 + (2r-1) \cos(\phi) \big)}.
\end{align*}
\end{lem}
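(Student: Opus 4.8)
The plan is to work directly from the semi-analytic formula \eqref{tq_org}, expanding the expression under the maximum to second order in $t$ while keeping careful track of the uniformity in the auxiliary angle $\phi$. Setting the first state to $\rho(r,0)$ and the second to $\rho(r+v_1 t, v_2 t)$ in \eqref{tq_org}, I would write the quantity under the maximum as $\tfrac14 \big( g(\phi,0) - g(\phi,t) \big)^2$, where
\[
g(\phi,t) \vc \sqrt{\,1 + (2r - 1 + 2 v_1 t)\cos(\phi + v_2 t)\,}.
\]
Since $r\in(0,1)$ gives $|2r-1|<1$, the radicand at $t=0$ obeys $1 + (2r-1)\cos\phi \ge 1 - |2r-1| > 0$ for \emph{every} $\phi$; by continuity there are $t_0>0$ and $\delta>0$ such that the radicand stays $\ge \delta$ for all $\phi$ and all $|t|\le t_0$. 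Hence $g$ is smooth in $(\phi,t)$ on this strip, with all $t$-derivatives bounded uniformly in $\phi$.

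First I would differentiate in $t$ at $t=0$. Writing $h(t) = 1 + (2r - 1 + 2v_1 t)\cos(\phi + v_2 t)$, one has $\partial_t g = h'/(2\sqrt{h})$, and a direct computation gives $h(0) = 1 + (2r-1)\cos\phi$ together with $h'(0) = 2 v_1 \cos\phi - (2r-1) v_2 \sin\phi$. Therefore $g(\phi,0) - g(\phi,t) = -t\,h'(0)/\big(2\sqrt{h(0)}\big) + O(t^2)$, and squaring yields, for each fixed $\phi$,
\[
\tfrac14\big(g(\phi,0)-g(\phi,t)\big)^2 = \frac{\big(2 v_1\cos\phi - (2r-1)v_2\sin\phi\big)^2}{16\,\big(1 + (2r-1)\cos\phi\big)}\, t^2 + O(t^3).
\]
The leading coefficient is precisely the $\phi$-integrand of \eqref{our_F}, which I denote $G_\phi(r,v)$.

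The one genuinely delicate point is the interchange of the maximum over $\phi$ with this expansion. Because the radicand is bounded below by $\delta>0$ uniformly in $\phi$ on the strip $|t|\le t_0$, Taylor's theorem with remainder gives $\tfrac14\big(g(\phi,0)-g(\phi,t)\big)^2 = G_\phi(r,v)\,t^2 + R(\phi,t)$ with a $\phi$-independent bound $|R(\phi,t)| \le C t^3$. The elementary two-sided estimate
\[
\max_\phi G_\phi(r,v)\,t^2 - C t^3 \le \max_\phi\big(G_\phi(r,v)\,t^2 + R(\phi,t)\big) \le \max_\phi G_\phi(r,v)\,t^2 + C t^3
\]
then identifies $T^Q\big(\rho(r,0),\rho(r+v_1 t, v_2 t)\big) = G(r,v)\,t^2 + O(t^3)$, the outer maximum being attained since $\phi \mapsto G_\phi(r,v)$ is continuous on the compact circle. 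I expect this uniform control of the remainder — establishing a $\phi$-independent Taylor bound and thereby licensing the exchange of $\max_\phi$ with $\lim_{t\to 0}$ — to be the main obstacle, whereas the differentiation itself is routine. This simultaneously yields Proposition \ref{prop:F}, since dividing \eqref{TQt} by $t^2$ and letting $t\downarrow 0$ recovers exactly the expression \eqref{our_F}.
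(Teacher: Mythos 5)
Your proposal is correct and follows essentially the same route as the paper: both work directly from the semi-analytic formula \eqref{tq_org}, linearize the square root in $t$, and justify exchanging $\max_\phi$ with the $t\downarrow 0$ expansion via a remainder bound that is uniform in $\phi$. The only substantive difference is how the uniformity is obtained --- you get it abstractly from compactness of the circle and the lower bound $1+(2r-1)\cos\phi \geq 1-\vert 2r-1\vert > 0$, whereas the paper computes explicit constants by elementary sum-to-product estimates (and treats $r=1/2$ separately via \eqref{Tcomm}); your explicit two-sided sandwich also makes the lower-bound half of the interchange slightly more visible than the paper's write-up, which displays only the upper estimate.
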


\begin{proof}
In order to slightly simplify the notation we set $\xi \vc 2 r - 1 \in (-1,1)$ and assume that $\xi \neq 0$, i.e. $r \neq 1/2$. Note that the maximally mixed state $\rho(1/2,0)$ commutes with any other state $\rho^B$ in the Bloch ball, hence one can use Formula \eqref{Tcomm} to compute the value of $T^Q\big(\rho(1/2,0),\rho^B \big) $ (cf. Eq. (12) in \cite{transport1}).

Let us start with writing, for sufficiently small $t > 0$,
\begin{align}
W \big(\rho(r,0),\rho(r + v_1 t,v_2 t) \big) & = \tfrac{1}{2} \max_{\phi \in [0,2\pi)} \left\vert \sqrt{1 + \xi \cos \phi} - \sqrt{1 + (\xi + 2 v_1 t) \cos (\phi + v_2 t) } \right\vert \notag \\
& \leq \sqrt{G(r,v)} \, t + \tfrac{1}{2} \, \max_{\phi \in [0,2\pi)} R(\xi,v,\phi,t), \label{WR}
\end{align}
where
\begin{align}
& R(\xi,v,\phi,t) \label{R} \\
& \qquad \vc \left\vert \sqrt{1 + (\xi + 2 v_1 t) \cos (\phi + v_2 t) } - \sqrt{1 + \xi \cos \phi} - \frac{2 v_1 t \cos \phi - \xi v_2 t \sin \phi}{2 \sqrt{1 + \xi \cos \phi}} \right\vert. \notag
\end{align}

Our goal is to establish a uniform bound on $R(\xi,v,\phi,t)$, which is at least quadratic in $t$ and does not depend on $\phi$. 

Let us identify the three terms in Formula \eqref{R},
\begin{align*}
A & \vc 1 + (\xi + 2 v_1 t) \cos (\phi + v_2 t),  && B \vc  1 + \xi \cos \phi, \\
C & \vc 2 v_1 t \cos \phi - \xi v_2 t \sin \phi.
\end{align*}
Then, we have
\begin{align*}
R & = \left\vert \sqrt{A} - \sqrt{B} - \frac{C}{2 \sqrt{B}} \right\vert = \left\vert \frac{A - B - C}{\sqrt{A} + \sqrt{B}} + \frac{C(B - A)}{2 \sqrt{B} (\sqrt{A} + \sqrt{B})^2} \right\vert \\
& \leq \frac{\vert A - B - C\vert}{\sqrt{B}} + \frac{\vert C \vert \cdot \vert A - B\vert}{2 B^{3/2}}.
\end{align*}
The most obvious estimates are
\begin{align*}
B \geq 1 - \vert \xi \vert, && \vert C \vert \leq \big( 2 \vert v_1 \vert + \vert \xi v_2 \vert \big) t.
\end{align*}
In order to establish suitable bounds on the remaining terms we will use the sum-to-product trigonometric identities \cite{AS72}. We have
\begin{align*}
\vert A - B - C \vert & = \Big\vert (\xi + 2 v_1 t) \big[  \cos (\phi + v_2 t) -  \cos \phi \big ] + \xi v_2 t \sin \phi \Big\vert \\
& =  \Big\vert 2 (\xi + 2 v_1 t) \sin \left( \phi + \tfrac{v_2 t}{2} \right) \sin \left( \tfrac{v_2 t}{2} \right) - \xi v_2 t \sin \phi \Big\vert \\
& \leq 2 \vert \xi \vert \cdot \Big\vert \sin \left( \phi + \tfrac{v_2 t}{2} \right) \sin \left( \tfrac{v_2 t}{2} \right) - \tfrac{1}{2} \xi v_2 t \sin \phi \Big\vert + 4 t \vert v_1 \vert \cdot \big\vert \sin \left( \tfrac{v_2 t}{2} \right) \big\vert  \\
& \leq 2 \vert \xi \vert \cdot \Big[ \big\vert \sin \left( \phi + \tfrac{v_2 t}{2} \right) - \sin \phi \big\vert \cdot \big\vert \sin \left( \tfrac{v_2 t}{2} \right) \big\vert + \\
& \qquad\qquad\qquad\qquad\qquad + \vert \sin \phi \vert \cdot \big\vert \sin \left( \tfrac{v_2 t}{2} \right) - \tfrac{v_2 t}{2} \big\vert \Big] + 2 \vert v_1 v_2 \vert t^2 \\
& \leq 2 \vert \xi \vert \cdot \Big[ \big\vert \sin \left( \tfrac{v_2 t}{4} \right) \big\vert \cdot \big\vert \cos \left( \phi + \tfrac{v_2 t}{4} \right) \big\vert \cdot \vert v_2 \vert t + \tfrac{\vert v_2 \vert^3 t^3}{48} \Big] + 2 \vert v_1 v_2 \vert t^2 \\
& \leq \tfrac{1}{2} \big( \vert \xi v_2^2 \vert + 4 \vert v_1 v_2 \vert \big) t^2 + \tfrac{1}{24} \vert v_2 \vert^3 t^3. 
\end{align*}
Similarly, we obtain
\begin{align*}
\vert A - B \vert & = \Big\vert \xi \big[  \cos (\phi + v_2 t) -  \cos \phi \big ] + 2 v_1 t \, \cos (\phi + v_2 t) \Big\vert \\
& \leq 2 \vert \xi \vert \cdot \big\vert \sin \left( \phi + \tfrac{v_2 t}{2} \right) \big\vert \cdot \big\vert \sin \left( \tfrac{v_2 t}{2} \right) \big\vert + 2 t \vert v_1 \vert \\
& \leq  \big( \vert \xi v_2 \vert + 2 \vert v_1 \vert \big) t.
\end{align*}

Summa summarum, we arrive at the following estimate
\begin{align*}
R(\xi,v,\phi,t) \leq \left( \frac{ \vert \xi v_2^2 \vert + 4 \vert v_1 v_2 \vert }{2\sqrt{1 - \vert \xi \vert}} + \frac{ \big( \vert \xi v_2 \vert + 2 \vert v_1 \vert \big)^2 }{(1 - \vert \xi \vert)^{3/2}} \right) t^2 + \frac{ \vert v_2 \vert^3 }{24\sqrt{1 - \vert \xi \vert}} \, t^3.
\end{align*}
This provides the desired uniform bound, which we can insert into Formula \eqref{WR} and conclude that
\begin{align*}
W \big(\rho(r,0),\rho(r + v_1 t,v_2 t) \big) & \leq \sqrt{G(r,v)} \, t + c_1 t^2 + c_2 t^3.
\end{align*} 
The square of the above equality implies Eq. \eqref{TQt}.
\end{proof}

One can show (cf. \cite[Appendix B]{transport2}) that the angle $\phi$ yielding the maximum in Eq. \eqref{our_F} defining the function $G$ can be obtained from the solution to a quartic equation. For the specific cases of two commuting or isospectral qubits we have
\begin{align}
\label{g11_g22_values}
G\big((r,(1,0)\big) = \tfrac{1}{8} \max \big\{ \tfrac{1}{1 - r}, \tfrac{1}{r} \big\}, && G\big((r,(0,1)\big) = \tfrac{1}{8} \left(1 - 2 \sqrt{r(1-r)}\right).
\end{align}
Equipped with these formulae one can reproduce Fig. \ref{fig:Riem} and convince oneself that Eq. \eqref{gF} fails.

\subsection{Proof of Lemma \ref{lem_F}}

In this Appendix we present the proof of Lemma \ref{lem_F}, which is the key to Theorem \ref{thm:qBit_monoton}. Let us fix a single-qubit channel $\Phi$ determined by two Kraus operators \eqref{k2}. The channel, as well as its dual, $\Phi^*(\sigma) = \sum_{i=1,2} K_i^\dagger \sigma K_i$, is parametrised by two angles $u \in [0,2\pi)$ and $v \in [0,\pi)$.   

Lemma \ref{lem_F} claims that if $\sigma^A$ and $\sigma^B$ are two diagonal Hermitian 2 $\times$ 2 matrices, then the positivity of the operator $F = C^Q - \sigma^A \otimes \bone - \bone \otimes \sigma^B$, appearing in the dual optimisation problem \eqref{dual}, implies the positivity of the operator $F^\Phi = C^Q - \Phi^*(\sigma^A) \otimes \bone - \bone \otimes \Phi^*(\sigma^B)$, for the extremal single-qubit channel $\Phi$ determined by Eq. \eqref{k2}.

Since both matrices $\sigma^A$ and $\sigma^B$ are diagonal, we have four independent real parameters. Note, however, that  a gauge transformation $\sigma^A \mapsto \sigma^A + a~ \bone$, $\sigma^B \mapsto \sigma^B - a~ \bone$, with any $a \in \fR$, does not affect neither the operator $F$, nor the value of the optimal quantum transport cost \eqref{dual}. Consequently, we can eliminate one of these four parameters. It is convenient to use the following parametrisation: 
\begin{align*}
  c \vc \Tr \big( \sigma^A \sigma_3 \big) \,, &&   d \vc \Tr \big( \sigma^B \sigma_3 \big) \, , && x \vc \Tr \big( \sigma^A + \sigma^B \big) ,
\end{align*}
where $\sigma_3$ denotes the third Pauli matrix. Then, both hermitian matrices $F$ and $F^{\Phi}$ take the following form

\begin{align}
F =  \frac{1}{2} \, \begin{pmatrix}
F_{11} & 0 & 0 & 0 \\
0 & F_{22} & F_{23} & 0 \\
0 & F_{23} & F_{33} & 0 \\
0 & 0 & 0 & F_{44}	
\end{pmatrix} , &&
F^{\Phi} =   \frac{1}{2} \begin{pmatrix}
F^{\Phi}_{11} & 0 & 0 & 0 \\
0 & F^{\Phi}_{22} & F^{\Phi}_{23} & 0 \\
0 & F^{\Phi}_{23} & F^{\Phi}_{33} & 0 \\
0 & 0 & 0 & F^{\Phi}_{44}	
\end{pmatrix} ,
\end{align}

with entries:

\begin{equation}\label{Fmatrix}
\begin{aligned}
F_{11} & = - \left(c + d + x \right), &  F_{22} & = 1-c+d- x, & F_{23} = F_{32} = -1, \\
F_{33} & = 1+ c-d- x, & F_{44} & = c+d- x,
\end{aligned}
\end{equation}
\begin{equation}\label{Fphi}
\begin{aligned}
F_{11}^{\Phi} & = - \left(c + d \right) \cos (u-v) - x, & \\
F_{22}^{\Phi} & = 1 - c \cos(u-v) + d \cos(u+v) - x, & F_{23}^{\Phi} = F_{32}^{\Phi} = -1, \\
F_{33}^{\Phi} & = 1 + c \cos(u+v) - d \cos(u-v) - x, & \\
F_{44}^{\Phi} & = \left(c + d \right) \cos (u+v) - x.
\end{aligned}
\end{equation}

The demand $F \geq 0$ implies, in particular, the non-negativity of the diagonal elements of $F$, which yields the following constraints on the parameters,
\begin{align}\label{parF}
x \leq 0, && x \leq c + d \leq -x, && x - 1 \leq c - d \leq 1 - x.
\end{align}
The non-negativity of the central 2 $\times$ 2 minor of $F$, which is the only non-trivial one, gives an additional constraint
\begin{align}\label{parF2}
(c - d)^2 \leq x(x-2).
\end{align}

To simplify notation let us set
\begin{align*}
\alpha \vc \cos (u-v) \quad \text{ and } \quad \beta \vc \cos (u+v).
\end{align*}

We start with showing that the diagonal elements of $F^\Phi$ are non-negative. If $\alpha \in [0,1]$ then \eqref{parF} yields
\begin{equation*}
F^{\Phi}_{11} = - \left(c + d \right) \alpha - x \geq x \big( \alpha - 1 \big) \geq 0,
\end{equation*}
while if $\alpha \in [-1,0]$ then we obtain
\begin{equation*}
F^{\Phi}_{11} \geq - x ( \alpha + 1 ) \geq 0.
\end{equation*}
Analogously, we deduce that $F^{\Phi}_{44} \geq 0$.

Let us now rewrite
\begin{align*}
F_{22}^{\Phi} & = 1 - x - \tfrac{1}{2} (c + d) ( \alpha - \beta ) + \tfrac{1}{2} (d-c) ( \alpha + \beta )
\end{align*}
and assume that $(\alpha - \beta) \geq 0$ and $(\alpha + \beta) \geq 0$. Then, inequalities \eqref{parF} yield,
\begin{align*}
F_{22}^{\Phi} & \geq 1 - x + \tfrac{1}{2} x ( \alpha - \beta ) + \tfrac{1}{2} (x-1) ( \alpha + \beta ) = \big[ 1 - \tfrac{1}{2} ( \alpha + \beta ) \big] + \tfrac{1}{2} x ( \alpha - 1) \geq 0,
\end{align*}
because $\alpha \in [0,1]$ and $(\alpha + \beta) \in [0,2]$. Analogously, one shows that $F_{22}^{\Phi} \geq 0$ under three other possible assumptions about the signs of $\alpha - \beta$ and $\alpha + \beta$. Along the same lines, one can prove that $F_{33}^{\Phi} \geq 0$.

Finally, let us consider the central 2 $\times$ 2 minor of $F^\Phi$. With $\pm$ denoting the sign of $\alpha + \beta$ and using the previous estimates on $F_{22}^{\Phi}$ and $F_{33}^{\Phi}$, along with constraint \eqref{parF2}, we can write the determinant of this minor as
\begin{align*}
F_{22}^{\Phi} F_{33}^{\Phi} - 1 & = \Big[ 1 - x - \tfrac{1}{2} (c + d) ( \alpha - \beta ) - \tfrac{1}{2} (c-d) ( \alpha + \beta ) \Big] \times \\
& \qquad \times \Big[ 1 - x - \tfrac{1}{2} (c + d) ( \alpha - \beta ) + \tfrac{1}{2} (c-d) ( \alpha + \beta ) \Big] - 1 \\
& \geq \Big[ 1 - x \pm \tfrac{1}{2} x ( \alpha - \beta ) - \tfrac{1}{2} (c-d) ( \alpha + \beta ) \Big] \times \\
& \qquad \times \Big[ 1 - x \pm \tfrac{1}{2} x ( \alpha - \beta ) + \tfrac{1}{2} (c-d) ( \alpha + \beta ) \Big] - 1 \\
& = \Big[1 - x \Big(1 \mp \tfrac{1}{2} ( \alpha - \beta ) \Big) \Big]^2 - \tfrac{1}{4} (c-d)^2 (\alpha + \beta)^2 - 1 \\
& \geq - 2x \Big(1 \mp \tfrac{1}{2} ( \alpha - \beta ) \Big) + x^2 \Big(1 \mp \tfrac{1}{2} ( \alpha - \beta ) \Big)^2  - \tfrac{1}{4} x(x-2) (\alpha + \beta)^2 \\
& = x^2 (1 \mp \alpha) (1 \pm \beta)  - x \Big[ \tfrac{1}{2} (1 \mp \alpha)^2 + \tfrac{1}{2} (1 \pm \beta)^2 + 1 + \alpha \beta \Big] \geq 0.
\end{align*}
The last inequality holds because $\alpha, \beta \in [-1,1]$, while $x \leq 0$ from constraints \eqref{parF}.

Since the central minor of $F^\Phi$ is the only non-trivial one we conclude that, indeed,  if $F$ is positive semidefinite, then $F^\Phi$ is so, for any quantum channel determined by Kraus operators \eqref{k2}.


\subsection{Conditions for positive semi-definiteness in Fano form}
\label{app:Fano}

In this Appendix, we derive a useful form of the constraint for the coupling matrix between two qubits to be positive semidefinite, using its Fano form \cite{Fano}. The obtained results are based on general formulas from \cite{byrd}.

The density matrix representing any quantum state must be Hermitian, positive semidefinite and of trace $1$. 
The first and the last property can be expressed easily using parametrisation via generators of the $su(n)$ Lie algebras, i.e., Hermitian traceless matrices. This leads to the Fano representation of a bipartite quantum state $\rho^{AB}$, which for two qubits takes the following form:

\begin{equation*}
    \rho^{AB} = \frac{1}{4}\left(\bone_4 + \sqrt{6} \vec{n} \cdot \vec{\lambda} \right)~,
\end{equation*}
where we temporarily adapt the normalization form \cite{byrd}, and use the basis of Hermitian operators on $\fC^{2\times 2}$ defined by the tensor product of Pauli operators:

\begin{equation*}
\begin{aligned}
& \lambda_i ,~ i = 1,2,3 &\leftrightarrow~~ &\frac{1}{\sqrt{2}} \sigma_j \otimes \bone_2,\\
& \lambda_i ,~~ i = 4,5,6 &\leftrightarrow~~ &\frac{1}{\sqrt{2}} \bone_2 \otimes \sigma_j, \\
& \lambda_i ,~~ i = 7,8,9 &\leftrightarrow~~ &\frac{1}{\sqrt{2}} \sigma_1 \otimes \sigma_j, \\
& \lambda_i ,~~ i = 10,11,12 &\leftrightarrow~~ &\frac{1}{\sqrt{2}} \sigma_2 \otimes \sigma_j, \\
& \lambda_i ,~~ i = 13,14,15 &\leftrightarrow~~ &\frac{1}{\sqrt{2}} \sigma_3 \otimes \sigma_j, \\
\end{aligned}
\end{equation*}

with $j = 1,2,3$ and the vector $\vec{n}$ has the form:

\begin{equation*}
    \vec{n} = ( a_1, a_2, a_3, b_1, b_2,b_3, R_{11}, R_{12}, R_{13}, R_{21}, R_{22}, R_{23}, R_{31}, R_{32}, R_{33})~.
\end{equation*}
Here $\vec{a}$, $\vec{b}$ are proportional to the Bloch vectors for consecutive subsystems and the matrix $R$ is defined in Eq. \eqref{eqStates_01}.
The density matrix $\rho^{AB}$ has non-negative eigenvalues if and only if all coefficients in its characteristic polynomial are non-negative \cite{byrd},

\begin{equation*}
    \det[\rho^{AB} - \lambda \mathbb{I}] = \lambda^4 - S_1 \lambda^3 + S_2 \lambda^2 - S_3 \lambda + S_4 = 0~.
\end{equation*}

In \cite{byrd} it was shown tha these coefficients can  be written as:
\begin{equation}
\label{S_intro}
\begin{aligned}
& S_1 = 1~, \\
& S_2 = \frac{3}{8}(1 - \vec{n} \cdot \vec{n})~, \\
& S_3 = \frac{1}{16}\left( 1 - 3 \vec{n} \cdot \vec{n} + 2 \vec{n} \cdot (\vec{n} \star \vec{n})  \right)~, \\
& S_4 = \frac{1}{64}\left(1 - 6 \vec{n} \cdot \vec{n} + 8 \vec{n} \cdot (\vec{n} \star \vec{n}) + 9 (\vec{n} \cdot \vec{n})^2  - 12  (\vec{n} \star \vec{n}) \cdot (\vec{n} \star \vec{n})  \right)~,
\end{aligned}
\end{equation}
where $\star$ denotes the product defined  \cite{byrd} via symmetric structure constants $d_{ijk}$ of the Lie algebra $su(4)$,
\begin{equation*}
    (\vec{a} \star \vec{b})_k = \frac{\sqrt{6}}{2} d_{ijk} a_i b_j~.
\end{equation*}
The expression $\vec{n} \cdot (\vec{n} \star \vec{n}) $ has a compact form
\begin{equation*}
    \vec{n} \cdot (\vec{n} \star \vec{n})  = 3  \sqrt{3} \left( \vec{a}^T R \vec{b} - \text{det}(R)  \right)~,
\end{equation*}
while the expression  $(\vec{n} \star \vec{n}) \cdot (\vec{n} \star \vec{n}) $ can be simplified to
\begin{multline*}
    (\vec{n} \star \vec{n}) \cdot (\vec{n} \star \vec{n}) = 3 \bigg[ \vec{b}^T R^T R \vec{b} + \vec{a}^T R R^T\vec{a}   + \\
    + (\text{det} R)^2 \sum_{i,j} \big((R^{-1})_{ij}\big)^2 - 2 (\text{det} R)~ \vec{b}^T R^{-1}\vec{a} + \vec{a}^2\vec{b}^2  \bigg]~,
\end{multline*}
under the assumption that the matrix $R$ is invertible.
Hence the coefficients $S_1, S_2, S_3, S_4$ entering \eqref{S_intro} can be rewritten as:

\begin{equation}
\label{eqBurd_01}
\begin{aligned}
& S_1 = 1~, \\
& S_2 = \frac{3}{8}\bigg(1 - \sum_{i,j} (R_{ij})^2 - \vec{a}^2 - \vec{b}^2 \bigg)~, \\
& S_3 = \frac{1}{16}\left[ 1 - 3 \bigg(\sum_{i,j} (R_{ij})^2 + \vec{a}^2 + \vec{b}^2\bigg) + 6 \sqrt{3}\bigg( \vec{a}^T R \vec{b} - (\text{det} R)  \bigg) \right]~, \\
& S_4 = \frac{1}{64}\Bigg[1 - 6 \bigg(\sum_{i,j} (R_{ij})^2 + \vec{a}^2 + \vec{b}^2\bigg) + 24 \sqrt{3}\bigg( \vec{a}^T R \vec{b} - (\text{det} R)  \bigg)  + \\
& \qquad\qquad +9 \bigg(\sum_{i,j} (R_{ij})^2 + \vec{a}^2 + \vec{b}^2\bigg)^2  - 36  \bigg( \vec{b}^T R^T R \vec{b} + \vec{a}^T R R^T\vec{a} + \\
& \qquad\qquad\qquad + (\text{det} R)^2 \sum_{i,j} \big((R^{-1})_{ij}\big)^2 - 2 (\text{det} R)~ \vec{b}^T R^{-1}\vec{a} + \vec{a}^2\vec{b}^2  \bigg) \Bigg]~. \\
\end{aligned}
\end{equation}

The necessary and sufficient conditions for $\rho^{AB}$ to be a density matrix come down to
\begin{equation*}
     0 \leq S_l, \quad \text{ for } \quad l = 1,2,3,4.
\end{equation*}

We end this section by discussing the special case of $\rho^A = \rho^B$, i.e. $\vec{a} = \vec{b} = (0,0,a_3)$, and a diagonal correlation matrix $R = \diag(R_{11}, R_{22}, R_{33})$. Adopting the normalization from \eqref{eqStates_01}, the  expression form \eqref{eqBurd_01} for $S_3$ and $S_4$ simplifies to the product of two planes and a hyperbola:
\begin{equation}
\label{eqStates_21}
\begin{aligned}
& S_3 = (R_{33}-1) \left(2 a_3^2-R_{33}+R_-^2-1\right)-(R_{33}+1) R_+^2 \geq 0~, \\
& S_4 = (1 - R_{33} + \sqrt{2} R_+)(1 - R_{33} -\sqrt{2} R_+) \left(4 a_3^2 + 2 R_-^2 - (1 + R_{33})^2\right) \geq 0~,
\end{aligned}
\end{equation}
where $R_\pm = (R_{11} \pm R_{22})/\sqrt{2}$. 

\subsection{Solutions of dual quantum transport problem for isospectral qubit states}
\label{app:izo_dual}

In this Appendix we present the construction of a one parametric family of solutions to dual quantum optimal transport problem for isospectral qubit states. Afterwards, we use those solutions to show that the solution of the original transport problem \eqref{eqStates_33} is unique. 

For isospectral states $\rho^A$, $\rho^B$ the optimal coupling $\underline{\rho}^{AB}$, given by \eqref{eqStates_33}, has only one nonzero eigenvalue, which demonstrates that Theorem \ref{thm_uniq_sol} cannot hold and we can expect multiple solutions both to the original and the dual problem.

The construction of the solutions to the dual problem for the isospectral states is based on the fact that any optimal coupling $\underline{\rho}^{AB}$ has disjoint support with any optimal $\underline{F}$, as follows from Eq. \eqref{diff_support}. 
Hence, by the eigendecomposition of $\underline{\rho}^{AB}$ we know that the optimal $\underline{F}$ must have the form
\begin{equation}
\label{eq: F_form1}
\underline{F} = \sum_{i,j} \vec{v_i}\vec{v_j}^\dagger s_{i,j}~,
\end{equation}
where $\vec{v_{i}}$ are eigenvectors to zero eigenvalues of $\underline{\rho}^{AB}$ and $s_{ij} = s_{ji}^*$ are some a priori unknown coefficients.
On the other hand, the general form of $F$ is given by 
\begin{equation}
\label{eq: F_form2}
F = \left(\frac{1}{4} - x_0\right)\bone_4 - \sum_{i = 1}^3 c_i \sigma_i \otimes \bone_2 - \sum_{i = 1}^3 d_i \bone_2 \otimes \sigma_i - \frac{1}{4} \sum_{i = 1}^3 \sigma_i \otimes \sigma_i~,
\end{equation}
where $c_i =\frac{1}{2} \Tr[\sigma^A \sigma_i]$, $d_i =\frac{1}{2} \Tr[\sigma^B \sigma_i]$, $x_0 = \Tr[\sigma^A] + \Tr[\sigma^B]$, and $\sigma_i$ are Pauli matrices.

Now, we have to compare  \eqref{eq: F_form1} with \eqref{eq: F_form2} to eliminate as many coefficients as possible and then check when the resulting matrix $\underline{F}$ is positive semidefinite. Elementwise comparison gives the following form of optimal couplings $\underline{\sigma}^A$, $\underline{\sigma}^B$

\begin{equation*}
\begin{aligned}
& (\underline{\sigma}^A)_{11} = {  \frac{a_3 \left(2w+2 a_3-1\right)-2w \left(4 a_3 d_3+1-w\right) \sec \phi -2(a_3-1) \left(w-1\right) \cos \phi}{8 a_3 w}}~, \\
& (\underline{\sigma}^A)_{12} =  \frac{\sqrt{1-a_3} \left(4 a_3 d_3+1-w\right) \tan \phi }{4 a_3\sqrt{1+a_3}}+ \\ & ~~~~~~~~~ - \frac{(\cos \phi +1) \tan (\phi/2)  \left(4 w a_3 d_3-(a_3+1) \left(w-1\right) \cos \phi+w-w^2\right)}{4 a_3(1+a_3) \cos \phi}~,  \\
& (\underline{\sigma}^A)_{21} = (\underline{\sigma}^A)_{12}~, \\
& (\underline{\sigma}^A)_{22} = - (\underline{\sigma}^A)_{11} + \frac{(w-1)(1 - \cos\phi)}{2 w}~, \\
\end{aligned}
\end{equation*}

\begin{equation}
\begin{aligned}
& \underline{\sigma}^B = \frac{1}{4a_3}\left(
\begin{array}{cc}
 4 a_3 d_3 & (4 a_3 d_3-w+1)\tan \phi \\
 (4 a_3 d_3-w+1) \tan \phi & - 4 a_3 d_3 \\
\end{array}
\right)~,
\end{aligned}
\end{equation}
where $w = \sqrt{1-a_3^2}$ and $d_3$ is the only parameter that has not been eliminated. 
For generic values of $d_3$ parameter, those solutions correspond to optimal $\underline{F}$ which has a  rank equal to three.

To ensure the positivity of $\underline{F}$, we check the values of coefficients $S_1, S_2,S_3, S_4$ of the characteristic polynomial of $\underline{F}$, similarly as in Appendix \ref{app:Fano},
\begin{equation*}
    \det[\underline{F} - \lambda \mathbb{I}] = \lambda^4 - S_1 \lambda^3 + S_2 \lambda^2 - S_3 \lambda + S_4~.
\end{equation*}

Firstly, let us note that $S_1 = \Tr[F] = (1 - 4 x_0)  \geq  0$ if and only if $(1-w)\cos (\phi ) \leq 1$, which is true for any  values of $w = \sqrt{1-a_3^2}$ and $\phi$. Since $\underline{F}$ has at most $3$ nonzero eigenvalues, the coefficient $S_4 = \det \underline{F}$ must be equal to 0. The $S_2$ coefficient imposes a quadratic condition in $d_3$:
\begin{multline}
\label{s2dual_izo}
- 32 d_3^2 a_3^2 w^3  \sec ^2(\phi)  -8 d_3 a_3 w^2  (\sec \phi -1) \left(2 w(1-w) \sec \phi -(1-w)^2\right) + \\ 
 +\sin ^2\left(\phi/2\right) \sec^2(\phi) w(1-w)^2 \Big(2 \left(1+(2-w)w\right) -a_3^2 \cos3 \phi + \\
+ 2  (1+w)^2 \cos2\phi+ (w(4+3w)-3) \cos \phi \Big)\geq 0~,
\end{multline}
and $S_3$ imposes a quadratic condition in $d_3$ as well
\begin{multline}
\label{s3dual_izo}
  8 d_3^2 a_3^2 w^3 \sin ^2\left(\phi/2\right) \sec ^2(\phi ) + \sin ^4(\phi/2) \sec ^2(\phi ) w(1-w)^2(w- \cos (\phi ) ) + \\
  + 4 d_3 a_3 \sin ^4(\phi /2) \sec (\phi )w(1-w) \big(w\left(\sec (\phi )+1\right) - 1 \big) \leq 0~.
\end{multline}

For both these quadratic inequalities, the discriminants are greater than $0$. Therefore, there exist intervals of $d_3$ satisfying each of them separately. Moreover, the roots of \eqref{s2dual_izo} lie between the roots of \eqref{s3dual_izo}. Hence, introducing the auxiliary variables
\begin{align*}
 d_{3,1} & = \frac{1}{16 a_3 w} \Bigg[(w-1)(1+3w) -(1-w)^2 \cos2\phi + \\
 & \quad + 2 a_3 \cos\phi \bigg(a_3-\frac{\sqrt{2}}{\vert a_3\vert} \left\vert\sin\left(\phi/2\right)\right\vert \sqrt{ (1-w)^3(1+3w) -(1-w)^4 \cos \phi }\bigg) \Bigg]   \\
 d_{3,2} & = \frac{1}{16 a_3 w}  \Bigg[(w-1)(1+3w) -(1-w)^2 \cos 2 \phi + \\
 & \quad + 2 a_3 \cos\phi \bigg(a_3+\frac{\sqrt{2}}{\vert a_3\vert} \left\vert\sin\left(\phi/2\right)\right\vert \sqrt{ (1-w)^3(1+3w) -(1-w)^4 \cos \phi }\bigg) \Bigg] 
\end{align*}
we obtain the following bounds for the value of $d_3$ ,
\begin{equation}
\label{Izo_dual_d3}
\begin{aligned}
&\text{if } \phi \leq \pi/2 &\text{ then} \quad d_{3,1} \leq d_3\leq d_{3,2}~,\\ 
&\text{if } \phi \geq \pi/2 &\text{ then} \quad d_{3,1} \geq d_3\geq d_{3,2}~.\\ 
\end{aligned}
\end{equation}

In the singular case, $\phi = \pi/2$, the entire interval of allowed values of $d_3$ shrinks to a single point: 

\begin{equation*}
    d_3 = d_{3,1} = d_{3,2} = \frac{w-1}{4 a_3}
\end{equation*}
resulting in only a single solution of the dual problem, with rank$(\underline{F}) = 3$ for any value of $a_3 \in (0,1)$. 

We constructed a family of the dual problem solutions in which for any values of $\phi$ and $a_3$ there exists an optimal $\underline{F}$ with rank $3$. Therefore, the corresponding optimal coupling $\underline{\rho}^{AB}$ is uniquely defined by the eigenvector to zero eigenvalue of the abovementioned $\underline{F}$, hence the original problem has only one solution \eqref{eqStates_33}.


\section{Monotonicity: numerical results}
\label{sec:num}

In this Appendix the results of numerical calculations are presented. We performed Monte Carlo simulations in which we drew pairs of quantum states and a quantum channel and checked how the optimal quantum transport cost between the selected states changes after the application the channel.

We examined the monotonicity of the optimal quantum transport for qutrits ($N=3$) and ququarts ($N=4$) using two different types of tests. In the first one we picked random quantum channels corresponding to dynamical Choi matrices of a fixed rank. In the second one we employed random extremal quantum channels, defined up to unitary pre- and post-processing of the quantum states. Because of the unitary invariance of $T^Q$ and its convexity (recall Prop. \ref{lem_comb2}), the monotonicity of $T^Q$ under such channels implies the monotonicity with respect to all CPTP maps.

To generate a random quantum state of order $N$ we first drew a rectangular complex random matrix $X$ of size $N \times k$, with $k \leq N$.
%
Setting $\rho = X X^{\dagger}/\Tr[X X^{\dagger}]$ we ensure that a random matrix $\rho$ of size $N$ and rank $k$ is non-negative and $\Tr \rho = 1$. Our calculations show that the rank of states does not affect significantly the results. Therefore we focus on full rank states to maximize the explored space. 

Random quantum channels $\Phi: \Omega_N \to \Omega_N$ of a given rank $k$ were generate via the Choi--Jamio{\l}kowski isomorphism. We generated a random $N^2 \times k$ complex matrix $X$ 
and created an auxiliary $N^2 \times N^2$ positive matrix $\widetilde{Y} = X X^\dagger \geq 0$. By the Choi--Jamio{\l}kowski isomorphism a quantum state corresponds to a quantum channel if the second of its partial traces is a totally mixed state. To assure this property we defined $Y = \Tr_1 \widetilde{Y}$, where $\Tr_1$ is the partial trace on the first subsystem. Then, we set $\rho_{\Phi} \propto \left(\bone_N \otimes Y^{-1/2}\right) \widetilde{Y} \left(\bone_N \otimes Y^{-1/2}\right)$ with the proportionality coefficient adjusted by the condition $\Tr[\rho_{\Phi}] = 1$ . For further details on random quantum channel generation we encourage the Reader to consult \cite{rand_quant,ZPNC11}.

Having a normalized density matrix $\rho_{\Phi}$, to calculate the action of channel $\Phi$ on any state $\rho$ we used the definition of Choi--Jamio{\l}kowski isomorphism: 

\begin{equation}
\begin{aligned}
& N \Tr_2[\rho_\Phi  ~(\bone_N \otimes \rho^T)] = N \Tr_2\left[(\Phi \otimes \bone)(\vert\psi^+ \ra\la\psi^+\vert ) ~(\bone_N \otimes \rho^T)\right] = \\
& =  \Tr_2\left[(\Phi \otimes \bone_{N\times N})\left(\left(\sum_{i = 1}^N\vert ii \ra \right)\left(\sum_{j = 1}^N \la jj \vert\right) \right) ~(\bone_N \otimes \rho^T)\right] =  \\ 
&= \sum_{ij}\Phi(\vert i \ra\la j\vert)\Tr[\vert i \ra\la j\vert \rho^T] = \Phi(\rho),
\end{aligned}
\end{equation}
where $\vert \psi^+\ra = \frac{1}{\sqrt{N}} \sum_{i = 1}^N \vert ii\ra$ is a maximally entangled state.

In the test of the second type we focused on extremal quantum channels. Using the methods presented in \cite{rand_extr_chann} we described all such channels by their Kraus operators, $K_i = U_i D_i$ with $i = 1, \ldots, N$, where $U_i$ and $D_i$ denote, respectively, the unitary matrices and diagonal matrices of order $N$. An explicit form of $U_i$ and $D_i$
for a qutrit reads, 
\begin{equation*}
\begin{aligned}
& D_1 = \diag(a,b,c) ~,~ D_2 = \diag(d,e,f) ~,~ D_3 = \sqrt{1 - D_1^2 - D_2^2}~, \\
& U_1 = \begin{pmatrix}
1 & 0 & 0\\
0 & 0 & 1\\
0 & 1 & 0
\end{pmatrix} ,~~ 
U_2 = \begin{pmatrix}
0 & 0 & 1\\
0 & 1 & 0\\
1 & 0 & 0
\end{pmatrix} ,~~ 
U_3 = \begin{pmatrix}
0 & 1 & 0\\
1 & 0 & 0\\
0 & 0 & 1
\end{pmatrix} .
\end{aligned}
\end{equation*}
An analogous parametrisation works for $N =4$ (see \cite{rand_extr_chann}). 
All of the parameters in the above formulae are non-negative and chosen so that the matrices $D_3$ (for the qutrit case) and $D_4$ (for the ququart case) are positive semi-definite.

In each Monte Carlo simulation we generated a random pair of initial states $\rho^A, \rho^B$ of order $N$ and a quantum channel $\Phi$ using the method presented above. Then, we calculated the quantum transport cost before and after the application of the channel, $T^Q(\rho^A, \rho^B)$ and $T^Q \big(\Phi(\rho^A), \Phi(\rho^B) \big)$, respectively and compared the results. 

The simulations were programmed in the Python language, using Numpy library for algebraic calculations, Cvxpy library for the transport costs calculations and a solver supplied by the MOSEK Optimizer for solving semidefinite programming problems. The plots were generated with help of the Matplotlib library. The numerical precision of calculations was set to $10^{-12}$. We checked the accuracy of the optimisation algorithm by generating random unitary channels, which correspond to $k=1$ in the procedure described above. The deviation between the obtained values of $T^Q$ before and after the application of a unitary channel was not greater than $10^{-9}$, hence we can take this value as the numerical accuracy of the algorithm.

The results of the simulations are presented in Table \ref{tab:num} and Figure \ref{fig:num}. In all of the considered cases the difference $T^Q(\rho^A, \rho^B) - T^Q(\Phi(\rho^A),\Phi(\rho^B))$ was positive and a few orders of magnitude larger than the numerical accuracy. 
These results allow us to conjecture that optimal quantum transport cost $T^Q$ for dimensions $3$ and $4$ is monotonous under all CPTP maps.

\begin{table}[h]
\begin{center}
\begin{tabular}{l|ll|ll}
& \multicolumn{2}{l|}{$N = 3$, qutrit} & \multicolumn{2}{l}{$N = 4$, ququart} \\ \specialrule{.1em}{.05em}{.05em} 
\textbf{Random channels} & rank $k$ & samples & rank $k$  & samples \\
& 2  & 300 000 & 2   & 200 000  \\
& 3  & 300 000 & 3   & 200 000  \\
& 9  & 600 000 & 4   & 200 000  \\
&    &         & 16  & 400 000   \\ \hline 
$\min \, (T^Q -T^Q\circ\Phi)$  &   & 0.00146 &   & 0.00644 \\ \specialrule{.1em}{.05em}{.05em}  
\textbf{Extremal channels} & samples & 7 500 000  & samples & 7 500 000 \\ \hline
$\min \, (T^Q -T^Q\circ\Phi)$ &  & 0.00023485  &  & 0.006197901       
\end{tabular}
\caption{\label{tab:num}{\small The results of the Monte Carlo simulations discussed in the text. `Samples' refers to number of drawn triples $(\rho^A, \rho^B, \Phi)$, whereas `$\min (T^Q -T^Q\circ\Phi)$' refers to the smallest difference $T^Q(\rho^A, \rho^B) - T^Q(\Phi(\rho^A),\Phi(\rho^B))$ found in the samples.}}
\end{center}
\end{table}

\begin{figure*}[h!]
    \centering
    \begin{subfigure}[h!]{0.5\textwidth}
        \centering
        \includegraphics[height=2in]{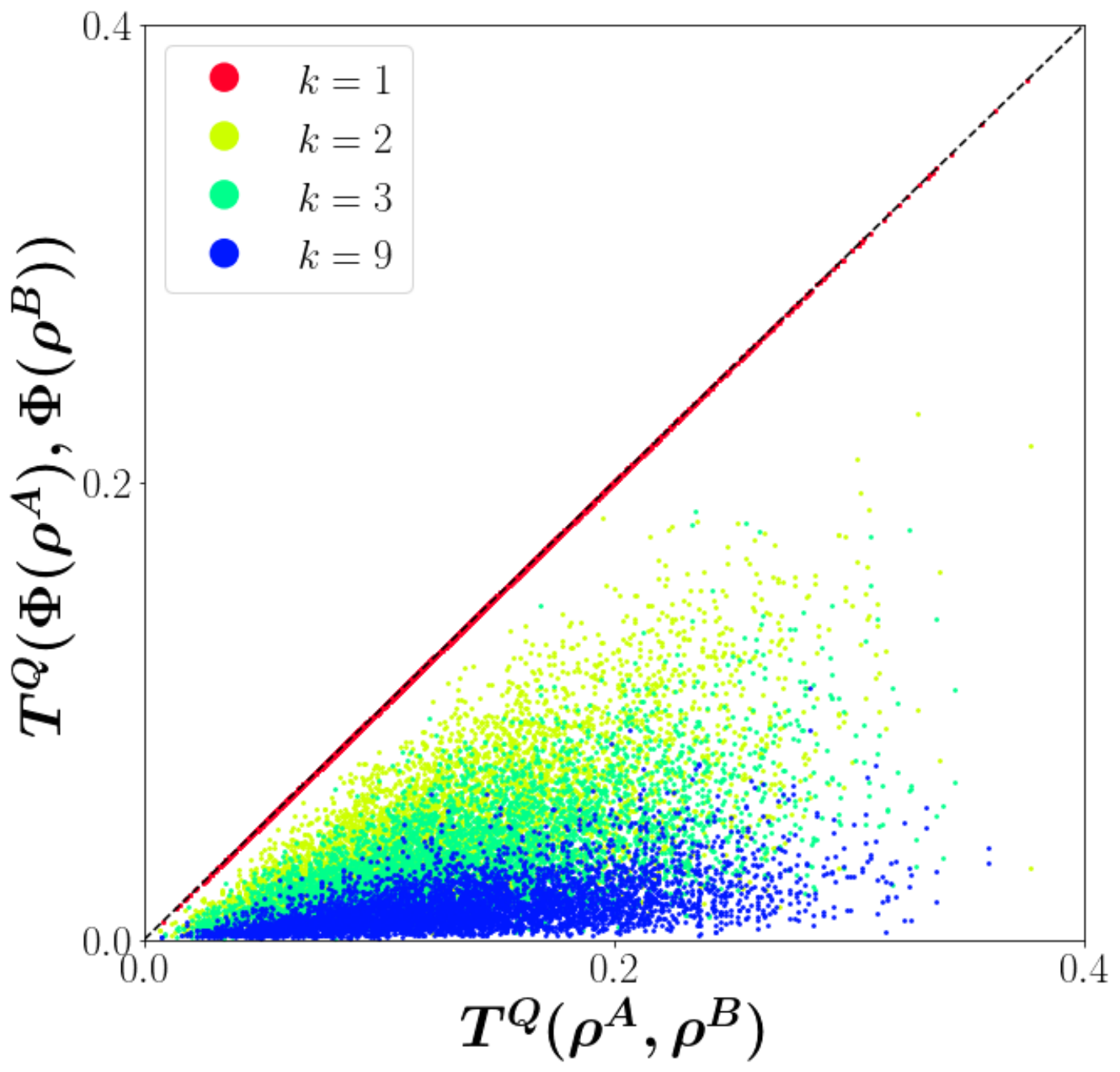}
    \end{subfigure}%
    ~ 
    \begin{subfigure}[h!]{0.5\textwidth}
        \centering
        \includegraphics[height=2in]{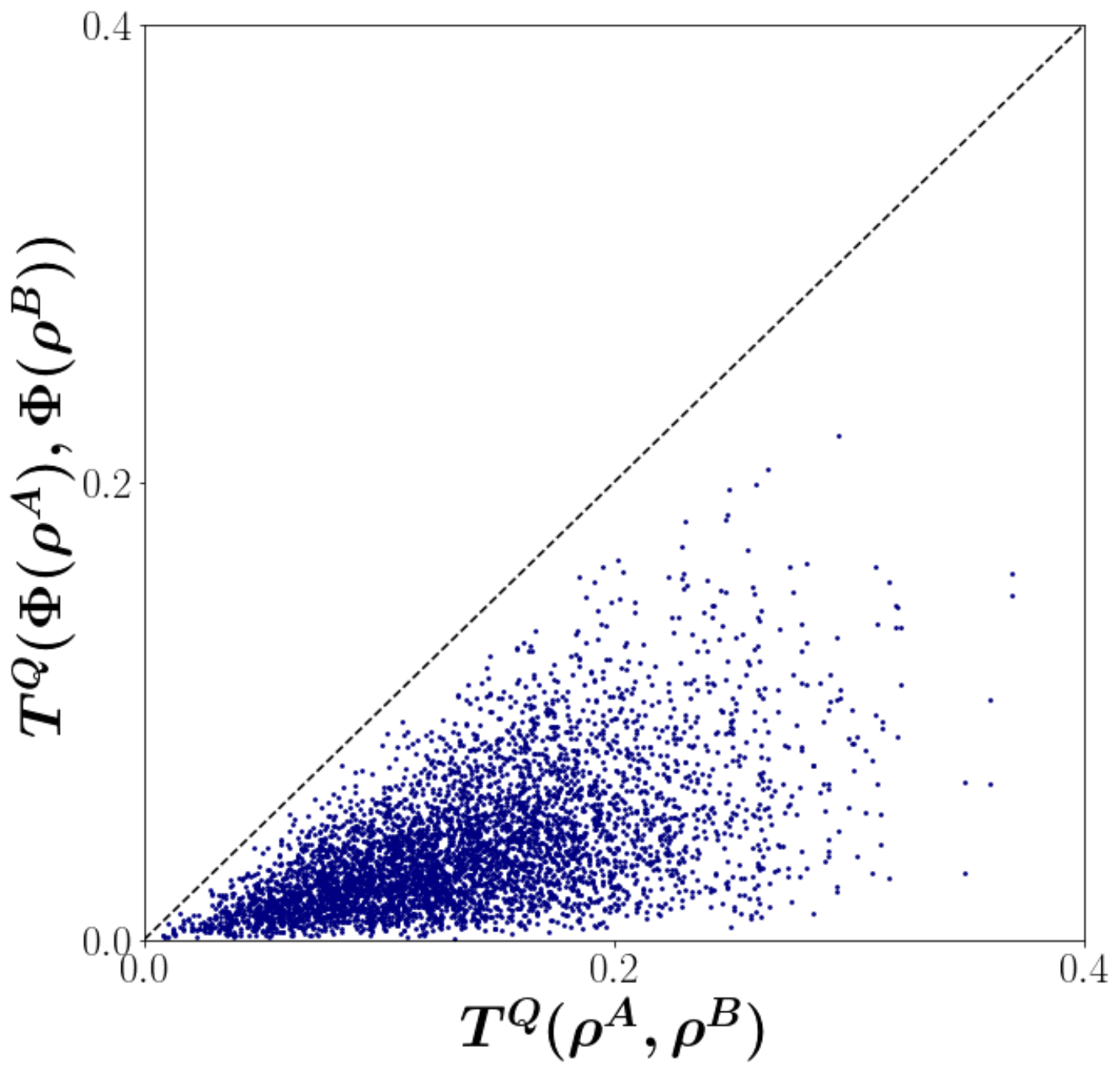}
    \end{subfigure} \\[10 pt]
    \begin{subfigure}[h!]{0.5\textwidth}
        \centering
        \includegraphics[height=2in]{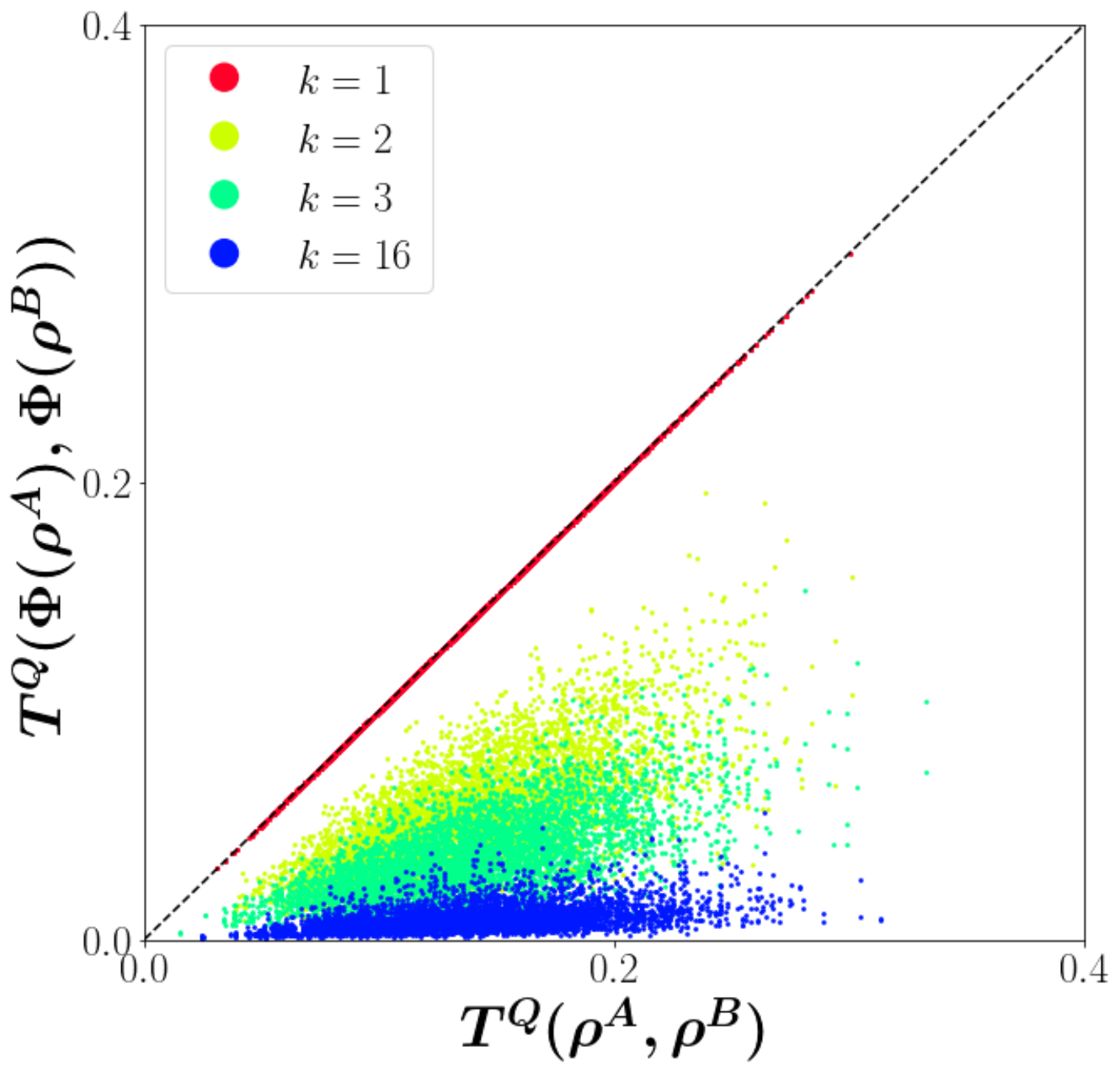}
    \end{subfigure}%
    ~ 
    \begin{subfigure}[h!]{0.5\textwidth}
        \centering
        \includegraphics[height=2in]{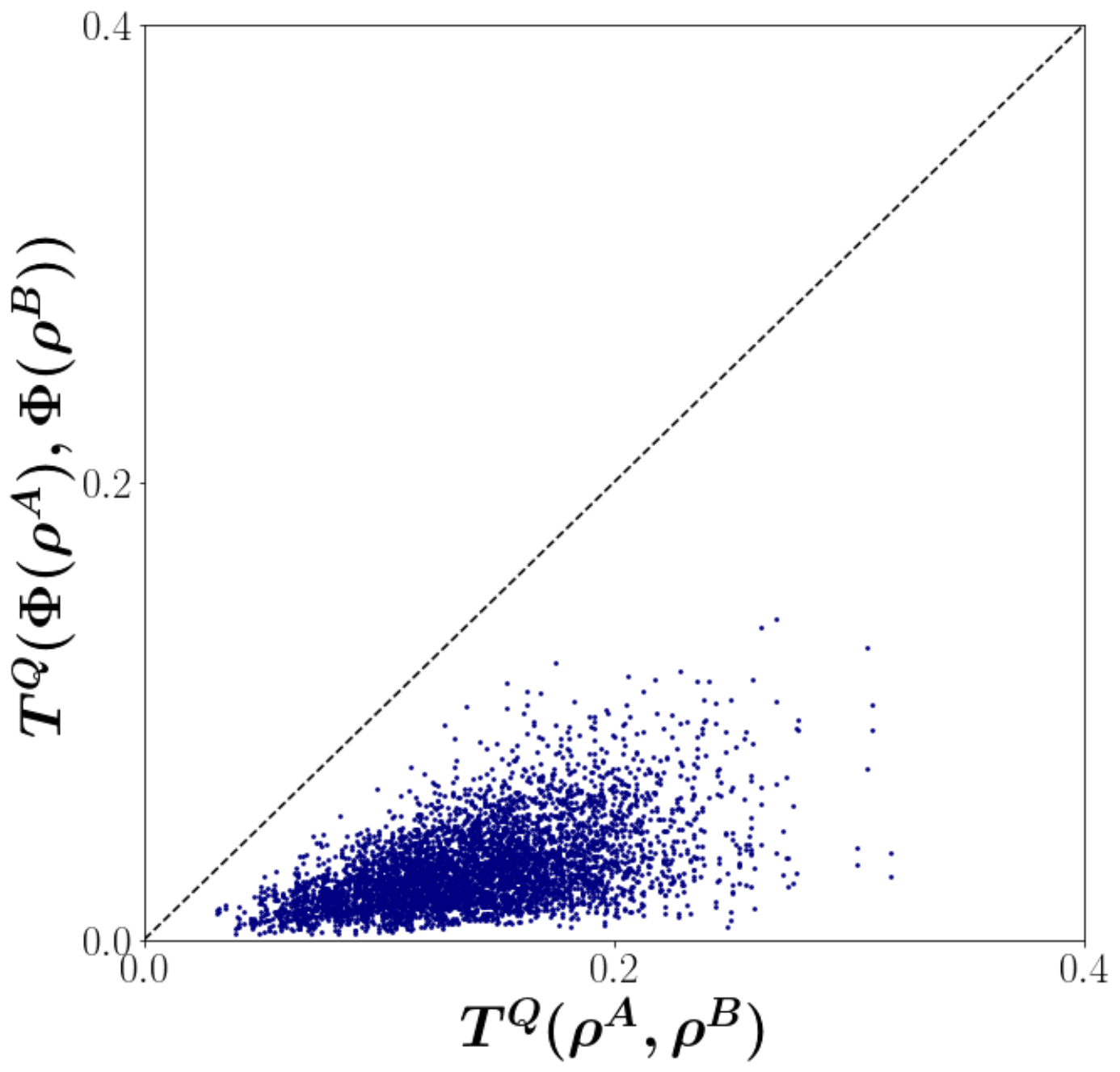}

    \end{subfigure}
    
    \caption{\label{fig:num}\small{The results of the Monte Carlo simulations of random density matrices of order $N=3$ (top) and $N=4$ (bottom) discussed in the text. The optimal quantum transport cost, $T^Q \circ \Phi$, after the action of a quantum channel $\Phi$ is plotted against the initial value of $T^Q$. Results obtained for channels of a fixed rank $k$ are presented on the left and the extremal channels on the right.
    The dashed line marks the equality $T^Q(\rho^A, \rho^B)  = T^Q\big( \Phi(\rho^A), \Phi(\rho^B) \big)$, which is saturated for unitary channels ($k = 1$).}}
\end{figure*}


\newpage
\newpage

\end{appendices}



\bibliographystyle{plain}

\end{document}